\newtheorem{remark}{Remark}
\newtheorem{example}{Example}
\pgfplotsset{plot coordinates/math parser=false}
\newlength\figureheight
\newlength\figurewidth
\newcolumntype{R}[2]{%
    >{\adjustbox{angle=#1,lap=\width-(#2)}\bgroup}%
    l%
    <{\egroup}%
}
\newcommand{\ra}[1]{\renewcommand{\arraystretch}{#1}}
\newcolumntype{d}[1]{D{.}{.}{#1}}
\def\VR{\kern-\arraycolsep\strut\vrule &\kern-\arraycolsep}
\def\VVR{\kern-\arraycolsep\strut\vrule\hspace{0.05em}\vrule &\kern-\arraycolsep}
\def\vr{\kern-\arraycolsep & \kern-\arraycolsep}
\begin{document}
\begin{frontmatter}                           

%
\title{Top-N recommendations in the presence of sparsity: An NCD-based approach}

\runningtitle{Top-N recommendations in the presence of sparsity: An NCD-based approach}

\author[A,B]{\fnms{Athanasios N.} \snm{Nikolakopoulos}\thanks{Corresponding author. E-mail: nikolako@ceid.upatras.gr. Tel.no: +302610997543}} and
\author[A,B]{\fnms{John D.} \snm{Garofalakis}\thanks{ E-mail: garofala@ceid.upatras.gr. Tel.no: +302610997562}}
\runningauthor{Nikolakopoulos et al.}
\address[A]{Department of Computer Engineering and Informatics, University of Patras, Panepistimioupoli, GR26500, Rio,\\ Greece}
\address[B]{Computer Technology Institute and Press ``Diophantus'', Panepistimioupoli, GR26504, Rio, Greece\\
E-mail: \{nikolako,garofala\}@ceid.upatras.gr}

\begin{abstract}
Making recommendations in the presence of sparsity is known to present one of the most challenging problems faced by collaborative filtering methods. In this work we tackle this problem by exploiting the innately hierarchical structure of the item space following an approach inspired by the theory of \textit{Decomposability}.  We view the itemspace as a Nearly Decomposable system and we define blocks of closely related elements and corresponding indirect proximity components. We study the theoretical properties of the decomposition and we derive sufficient conditions that guarantee full item space \textit{coverage} even in \textit{cold-start} recommendation scenarios. A comprehensive set of experiments on the MovieLens and the Yahoo!R2Music datasets, using several widely applied performance metrics, support our model's theoretically predicted properties and verify that NCDREC outperforms several state-of-the-art algorithms, in terms of recommendation accuracy, diversity and sparseness insensitivity.

\end{abstract}

\begin{keyword}
Recommender Systems\sep Collaborative Filtering\sep Sparsity\sep Decomposability\sep Markov Chain Models\sep Long-Tail Recommendation
\end{keyword}

\end{frontmatter}

\section{Introduction}

Recommender Systems (RS) are information filtering tools that have been widely adopted over the past decade, by the majority of e-commerce sites, in order to make intelligent personalized product suggestions to their customers~\cite{Tivo,tapestry,amazon}.   RS technology enhances user experience and it is known to increase user fidelity to the system~\cite{ricci2011introduction}. Correspondingly, from an economic perspective, the utilization of recommender systems is known to assist in building bigger, and more loyal customer bases, and to drive a significant increase in the volume of product sales~\cite{Sales1,Sales2,Schafer:1999:RSE:336992.337035}. 

The development of recommender systems is -- in a very fundamental sense -- based on a rather simple observation: people, very often rely their every day decision making on advise and suggestions provided by the community. For example, it is very common when one wants to pick a new movie to watch, to take into consideration published reviews about the movie or ask friends for their opinion. Mimicking this behavior, recommender systems exploit the plethora of information produced by the interactions of a large community of users, and try to deliver personalized suggestions that aim to help an active user cope with the devastating number of options in front of him.

\begin{figure*}[tph!]
\centering
\epsfig{file=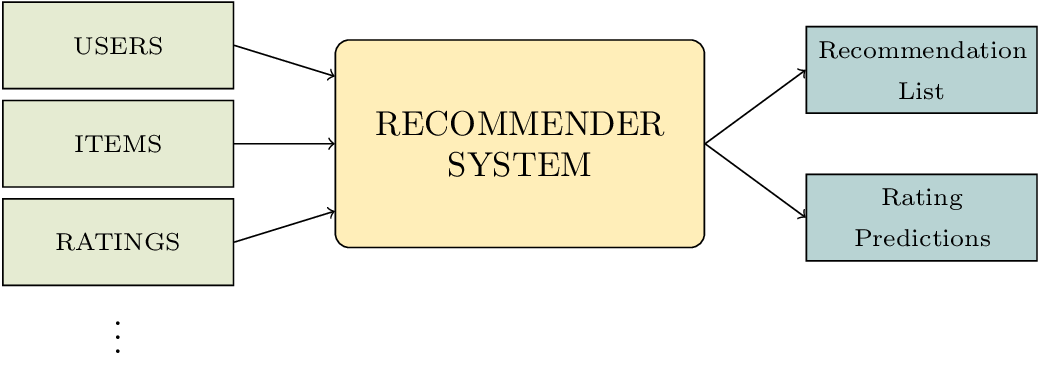,width=0.75\linewidth}
\caption[Example Recommender System]{Example Recommender System}
\label{fig:system-combination-eng}
\end{figure*}

Among the several different approaches to building recommender systems, Collaborative Filtering (CF) is widely regarded as one of the most successful ones~\cite{Tivo,Herlocker:1999:AFP:312624.312682,amazon,GroupLens,Sarwar:2001:ICF:371920.372071}. CF methods basically model both users and items as sets of ratings, and focus on the sparse rating matrix that lies at the common core, trying to either estimate the missing values, or find promising cells to propose (see Figure~\ref{fig:system-combination-eng}). In the majority of CF related work for reasons of mathematical convenience (as well as fitness with formal optimization methods), the recommendation task reduces to predicting the ratings for all the unseen user-item pairs (\textit{prediction-based} methods~\cite{desrosiers2011comprehensive,koren2011advances,Vozalis:2006:ECF:1239776.1239777}).  Recently, however, many leading researchers have turned significant attention to \textit{ranking-based} methods which are believed to conform more naturally with how the recommender system will actually be used in practice~\cite{Balakrishnan:2012:CR:2124295.2124314,Cremonesi:2010:PRA:1864708.1864721,Fouss:2007:RCS:1263132.1263335,Freno:2009:SPE:1557019.1557059,Gori:2007:IRB:1625275.1625720,marlin2009collaborative,Lanczos,COSRANK_ACOSRANK:2008}.

Despite their success in many application settings, RS techniques suffer a number of problems that remain to be resolved.
One of the most important such problems arises from the fact that often available data are insufficient for identifying similar elements and is commonly referred to as the \textit{Sparsity Problem}. Sparsity imposes serious limitations to the quality of recommendations, and it is known to decrease significantly the diversity and the effectiveness of CF methods -- especially in recommending unpopular items (\textit{``long tail''} problem). Unfortunately, sparsity is an intrinsic characteristic of recommender systems because in the majority of realistic applications, users typically interact with only a small portion of the available items, and the problem is aggravated even more, by the fact that new users with no ratings at all, are regularly added to the system (\textit{Cold-Start} problem~\cite{Bobadilla:2013:RSS:2483330.2483573,NikolakopoulosK15}).

Among the most promising approaches in dealing with limited coverage and sparsity are \textit{graph-based} methods \cite{desrosiers2011comprehensive,Fouss:2012:EIK:2207289.2207578,Gori:2007:IRB:1625275.1625720,COSRANK_ACOSRANK:2008}. The methods of this family exploit transitive relations in the data, which makes them able to estimate the relationship between users and items that are not directly connected.   Gori and Pucci \cite{Gori:2007:IRB:1625275.1625720} proposed ItemRank; a PageRank-inspired scoring algorithm that produces a personalized ranking vector using a random walk with restarts on an items' correlation graph induced by the ratings.  
Fouss et al.  \cite{Fouss:2012:EIK:2207289.2207578,Fouss:2007:RCS:1263132.1263335} create a graph model of the RS database and they present a number of methods to compute node similarity measures, including the random walk-related average Commute Time and average First Passage Time, as well as the pseudo-inverse of the graph's Laplacian. They compare their methods against other state-of-the-art graph-based approaches such as, the sophisticated node similarity measure that integrates indirect paths in the graph, based on the matrix-forest theorem \cite{chebotarev2006matrix}, and a similarity measure based on the well known Katz algorithm \cite{KATZ}. 

Here, we attack the sparsity problem from a different perspective. The fact, that sparsity has been commonly observed in models of seemingly unrelated naturally emerging systems, suggests an even more fundamental cause behind this phenomenon. According to Herbert A. Simon, this inherent sparsity is intertwined with the structural organization and the evolutionary viability of these systems. In his seminal work on the architecture of complexity~\cite{Simon:1996:SA:237774}, he argued that the majority of sparse hierarchically structured systems share the property of having a \textit{Nearly Completely Decomposable} (NCD) architecture: they can be seen as comprised of a hierarchy of interconnected blocks, sub-blocks and so on, in such a way that elements within any particular such block relate much more vigorously with each other than do elements belonging to different blocks, and this property holds between any two levels of the hierarchy. 

The analysis of decomposable systems has been pioneered by Simon and Ando~\cite{simon1961aggregation} who reported on state aggregation in linear models of economic systems, but the universality and the versatility of Simon's idea have permitted the theory to be used in many complex problems from diverse disciplines ranging from economics, cognitive theory and social sciences, to computer systems performance evaluation, data mining and information retrieval \cite{cevahir2011site,courtois1977decomposability,blockrank,meyer2012stochastic,MeyerRV13,nikolakopoulos2013ncdawarerank,yin2013continuous}.

The criteria behind the decomposition vary with the goals of the study and the nature of the problem under consideration. For example, in the stochastic modeling literature, decomposability is usually found in the \textit{time domain} and the blocks are defined to separate the short-term from the long-term temporal dynamics~\cite{courtois1977decomposability,yin2013continuous}. In other cases the decomposition is chosen to highlight known \textit{structural  properties} of the underlying space; for example in the field of link analysis, many leading researchers have exploited the nearly decomposable structure of the Web, 
from a computational (faster extraction of the PageRank vector) as well as a qualitative (generalization of the random surfer teleportation model) perspective~\cite{cevahir2011site,blockrank,nikolakopoulos2013ncdawarerank}.

In this work\footnote{This work extents significantly our initial contribution~\cite{NCDRECConferencePaper}, adding detailed  presentation of the NCDREC model enriched by thorough explanations and examples, as well as rigorous theoretical analysis of its constituents parts. Furthermore, in this paper we provide a more in-depth coverage of related literature including thorough discussions of the competing state-of-the-art recommendation techniques as well as details regarding their implementation in our experiments.}, building on the intuition behind NCD, we decompose the item space into \textit{blocks}, and we use these blocks to characterize the inter-item \textit{proximity} in a macroscopic level. Central to our approach is the idea that blending together the direct with the indirect inter-item relations can help reduce the sensitivity to sparseness and improve the quality of recommendations. To this end, we propose \textbf{NCDREC}, a novel ranking based recommendation method which:
\begin{itemize}
\item Provides a theoretical framework that enables the exploitation of item space's innately decomposable structure in an efficient, and scalable way. 
\item Produces recommendations that outperform several state-of-the-art methods, in widely used metrics (Section~\ref{SubSec:StandardRecommendation}), achieving high quality results even in the  generally harder task of recommending \textit{long-tail} items (Section~\ref{SubSec:LongTailTests}).   
\item Displays low sensitivity to the problems caused by the sparsity of the underlying space and treats \textit{New Users} more fairly;  this is supported both by NCDREC's theoretical properties (Section~\ref{ColdStartSubcomponent}) and our experimental findings (Section~\ref{SubSec:NewUsersTests}).  
\end{itemize}

The rest of the paper is organized as follows. 
In Section~\ref{NCDREC_Framework}, after discussing briefly the intuition behind the exploitation of Decomposability for recommendations, we introduce formally our model and we study several of its interesting theoretical properties (Section~\ref{SubSec:NCDREC_Model}). In Section~\ref{SubSec:NCDREC_Algorithm}  we present the NCDREC algorithm and we talk about its storage and computational aspects. Our testing methodology and experimental results are presented in Section~\ref{Sec_Experiments}. Finally, Section~\ref{Sec_Conclusion} concludes this paper and outlines directions for future work. 

\section{NCDREC Framework}
\label{NCDREC_Framework}

\subsection{Exploiting Decomposability for Recommendations}

In the method we propose in this work, we see the set of items as a decomposable space and, following the modeling approach of a recently proposed Web ranking framework~\cite{nikolakopoulos2013ncdawarerank,RandomSurfingWithoutTeleportation}, we use the decomposition to characterize  macro-relations between the elements of the dataset that can hopefully refine and augment the underlying collaborative filtering approach and ``fill in'' some of the void left by the intrinsic sparsity of the data. The criteria behind the decomposition can vary with the particular aspects of the item space, the information available etc. For example, if one wants to recommend hotels, the blocks may be defined to depict geographic information; in the movie recommendation problem, the blocks may correspond to the categorization of movies into genres, or other movie attributes etc. To give our framework maximum flexibility, we extend the notion to allow overlapping blocks; intuitively this seems to be particularly useful in many modeling  approaches and recommendation problems.

Before we proceed to the rigorous definition of the NCDREC framework, we outline briefly our approach: First, we define a decomposition, $\mathcal{D}$, of the item space into blocks and we introduce the notion of $\mathcal{D}$-proximity, to characterize the implicit inter-level relations between the items. Then, we translate this proximity notion to suitably defined matrices that quantify these macroscopic inter-item relations under the prism of the chosen decomposition. These matrices need to be easily handleable in order for our method to be applicable in realistic scenarios. Furthermore, their contribution to the final model needs to be weighted carefully so as not to ``overshadow'' the pure collaborative filtering parts of the model. In achieving these, we follow an approach based on perturbing the standard CF parts, using suitably defined low-rank matrices. Finally, to fight the inevitably extreme and localized sparsity related to cold start scenarios we create a Markov chain-based subcomponent, designed to increase the percentage of the item space covered by the produced recommendations, and we study the conditions (in terms of theoretical properties of the proposed decomposition) under which full item space coverage is guaranteed.  

\subsection{NCDREC Model and Theoretical Properties}
\label{SubSec:NCDREC_Model}
\newtheorem{proposition}{Proposition}
\newtheorem{theorem}{Theorem}
\newtheorem{lemma}{Lemma}
\newtheorem{definition}{Definition}

\subsubsection{Notation}
All vectors are represented by bold lower case letters and they are column vectors (e.g., $\boldsymbol{\omega}$). All matrices are represented by bold upper case letters (e.g., $\mathbf{W} $). The $i^{\text{th}}$ row and $j^{\text{th}}$ column of matrix $\mathbf{W}$ are denoted $\mathbf{w}^\intercal_{i}$ and $\mathbf{w}_{j}$, respectively. The $ij^{th}$ element of matrix $\mathbf{W}$ is denoted $[\mathbf{W}]_{ij}$. We use $\operatorname{diag}(\boldsymbol{\omega})$ to denote the matrix having vector $\boldsymbol{\omega}$ on its diagonal, and zeros elsewhere. We use calligraphic letters to denote sets (e.g., $\mathcal{U,V}$). Finally, symbol $\triangleq$ is used in definition statements.

\subsubsection{Definitions}
\label{ModelDefinitions}
Let $\mathcal{U} = \{u_1,\dots,u_n\}$ be a set of \textit{users}, $\mathcal{V} = \{v_1,\dots,v_m\}$ a set of \textit{items} and $\mathcal{R}$ a set of tuples 
\begin{equation}
\mathcal{R} \triangleq \{t_{ij}\} = \{(u_i,v_j,r_{ij})\},
\end{equation} where $r_{ij}$ is a nonnegative number referred to as the \textit{rating} given by user $u_i$ to the item $v_j$. For each user in $\mathcal{U}$ we assume he has rated at least one item; similarly each item in $\mathcal{V}$ is assumed to have been rated by at least one user. 

We define an associated user-item \textbf{rating matrix} $\mathbf{R}\in\mathbb{R}^{n \times m}$, whose $ij^{th}$ element equals $r_{ij}$, if $t_{ij}\in\mathcal{R}$, and zero otherwise. 
For each user $u_i$, we denote $\mathcal{R}_{i}$  the set of items rated by $u_i$ in $\mathcal{R}$, and we define a \textbf{preference vector} $\boldsymbol{\omega}\triangleq [\omega_1,\dots,\omega_m]$, whose nonzero elements contain the user's ratings that are included in $\mathcal{R}_{i}$, normalized to sum to one. 

We consider an indexed family of non-empty sets 
\begin{equation}
		\mathcal{D} \triangleq \{\mathcal{D}_1,\dots,\mathcal{D}_K\},
\end{equation}
that defines a $\mathcal{D}$-\textbf{decomposition} of the underlying space $\mathcal{V}$, such that $\mathcal{V}=\bigcup_{k=1}^{K}\mathcal{D}_k$. Each set $\mathcal{D}_I$ is referred to as a $\mathcal{D}$-\textbf{block}, and its elements are considered related according to some criterion.

We define 
\begin{equation}
\mathfrak{D}_v\triangleq \bigcup_{{v \in \mathcal{D}_k}}\mathcal{D}_k
\end{equation} 
to be the \textbf{proximal set} of items of $v \in \mathcal{V}$, i.e. the union of the $\mathcal{D}$-blocks that contain $v$. We use $N_v$ to denote the number of different blocks in $\mathfrak{D}_v$, and 
\begin{equation}
n^\ell_{u_i} \triangleq \lvert{\{r_{ik}: (r_{ik}>0) \wedge (v_k\in \mathcal{D}_\ell)\}}\rvert
\end{equation}
 for the number of items rated by user $u_i$ that belong to the $\mathcal{D}$-block, $\mathcal{D}_\ell$. 
Every $\mathcal{D}$-decomposition is also associated with an undirected graph
\begin{equation}
\mathcal{G}_{\mathcal{D}}\triangleq(\mathcal{V}_{\mathcal{D}},\mathcal{E}_{\mathcal{D}})
\end{equation} 
Its vertices correspond to the $\mathcal{D}$-blocks, and an edge between two vertices exists whenever the intersection of these blocks is a non-empty set. 
This graph is referred to as the \textbf{block coupling graph} for the $\mathcal{D}$-decomposition. 

Finally, with every $\mathcal{D}$-decomposition we associate an \textbf{Aggregation matrix} $\mathbf{A}_{\mathcal{D}}\in \mathbb{R}^{m \times K}$, whose $jk^{th}$ element is 1, if $v_j \in \mathcal{D}_k$ and zero otherwise.

\subsubsection{Main Component}
The pursuit of ranking-based recommendations, grants us the flexibility of not caring about the exact recommendation scores; only the correct item ordering is needed. This allows us to manipulate the missing values of the rating matrix in an ``informed'' way so as to introduce some preliminary ordering based on the user's expressed opinions about some items, and the way these items relate with the rest of the item space.  

The existence of such connections is rooted in the idea that a user's rating, except for expressing his direct opinion about a particular item, also gives a clue about his opinion regarding the proximal set of this item. So, ``propagating''  these opinions through the decomposition to the many related elements of the item space, can hopefully refine the estimation of his preferences regarding the vast fraction of the item set for which he has not expressed opinions, and introduce an ordering between the zeros in the rating matrix, that will hopefully  relieve sparsity related problems.  

Having this in mind, we perturb the user-item rating matrix $\mathbf{R}$, with an \textbf{NCD preferences matrix} $\mathbf{W}$ that propagates the expressed user opinions about particular items to the proximal sets. The resulting matrix is given by: 
\begin{equation}
\mathbf{G} \triangleq \mathbf{R} + \epsilon\mathbf{W},
\end{equation}
where $\epsilon$ is a positive parameter, chosen small so as not to ``eclipse''  the actual ratings. The NCD preferences matrix is formally defined below:

\noindent\textbf{NCD Preferences Matrix $\mathbf{W}$}. The NCD preferences matrix, is defined to propagate each user's ratings to the many related elements (in the $\mathcal{D}$-decomposition sense) of the item space. Formally, matrix $\mathbf{W}$ is defined as follows: 
\begin{equation}
\mathbf{W}  \triangleq  \mathbf{Z} \mathbf{X}^{\intercal}
\end{equation}
where matrix $\mathbf{X}$ denotes the row normalized version of $\mathbf{A}_{\mathcal{D}}$, and the $ik^{th}$ element of matrix $\mathbf{Z}$ equals $(n^k_{u_i})^{-1}[\mathbf{R}\mathbf{A}_{\mathcal{D}}]_{ik}$, when $n^k_{u_i}>0$, and zero otherwise.

The final recommendation vectors are produced by projecting the perturbed data onto an $f$-dimensional space. In particular, the final recommendation vectors are defined to be the rows of matrix 
\begin{equation}
\mathbf{\Pi} \triangleq \mathbf{U_f \Sigma_f V_f^\intercal},
\label{Eq:MainComponentRecVectors}
\end{equation} 
 where matrix $\mathbf{\Sigma_f}\in \mathbb{R}^{f\times f}$
is a diagonal matrix containing the first $f$ singular values of $\mathbf{G}$, and matrices $\mathbf{U_f}\in \mathbb{R}^{n\times f}$ and $\mathbf{V_f}\in\mathbb{R}^{m\times f}$ are orthonormal matrices containing the corresponding left and right singular vectors.

\begin{remark}
	{\upshape 
	In fact, the recommendation vectors produced by  Eq.~(\ref{Eq:MainComponentRecVectors}) can be seen as arising from a low dimensional eigenspace of an NCDaware inter-item similarity matrix. We discuss this further in Appendix \ref{Ap:MainComponentDiscussion}.
}
\end{remark}

\subsubsection{ColdStart Component}
\label{ColdStartSubcomponent}
In some cases the sparsity phenomenon becomes so intense and localized that the perturbation of the ratings through matrix $\mathbf{W}$ is not enough. Take for example newly emerging users in an existing recommender system. 
Naturally, because these users are new, the number of ratings they introduce in the RS is usually not sufficient to be able to make reliable recommendations. If one takes into account only their direct interactions with the items, the recommendations to these newly added users are very likely to be restricted in small subsets of $\mathcal{V}$, leaving the majority of the item space uncovered.  

\begin{figure*}[!htpb]
	\centering
	\epsfig{file=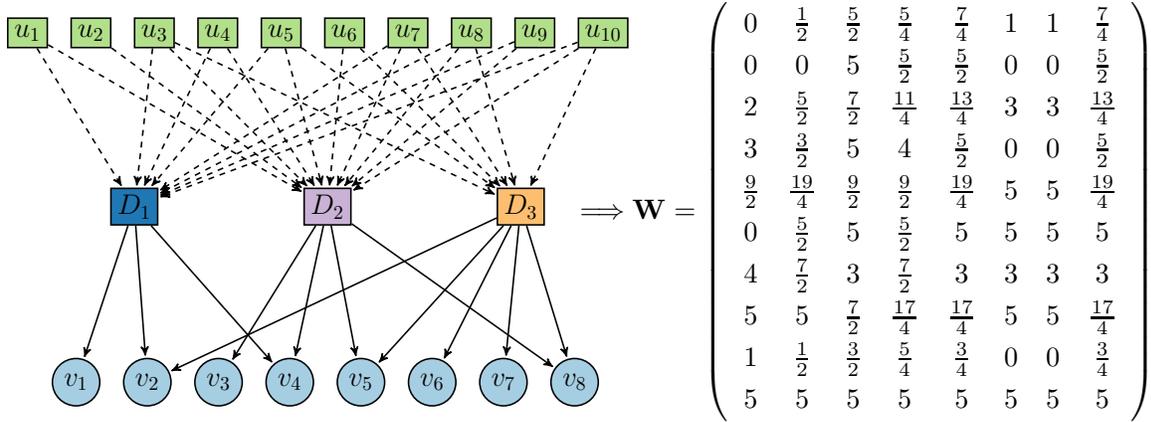,width=0.95\linewidth}
	\caption{We see matrix $\mathbf{W}$ that corresponds to the Example \ref{Example}.  }
	\label{fig:Example1}
\end{figure*}

To address this problem which represents one of the continuing difficulties faced by recommender systems in operation~\cite{Bobadilla:2013:RSS:2483330.2483573}, we create a {\sc ColdStart} subcomponent based on a discrete Markov chain model over the item space with transition probability matrix $\mathbf{S}$, defined to bring together the direct as well as the decomposable structure of the underlying space. Matrix $\mathbf{S}$ is defined to consist of three components, namely  a \textbf{rank-one preference matrix} $\mathbf{e}\boldsymbol{\omega}^{\intercal}$ that rises from the explicit ratings of the user as presented in the training set; a \textbf{direct proximity matrix} $\mathbf{H}$, that depicts the direct inter-item relations; and an \textbf{NCD proximity matrix} $\mathbf{D}$ that relates every item with its proximal sets. Concretely, matrix $\mathbf{S}$ is given by:
\begin{equation}
\mathbf{S} \triangleq (1-\alpha) \mathbf{e}\boldsymbol{\omega}^{\intercal} + \alpha(\beta \mathbf{H} + (1-\beta) \mathbf{D}) 
\end{equation}
with $\alpha$ and $\beta$ being positive real numbers for which $\alpha,\beta < 1$ holds. Parameter $\alpha$ controls how frequently the Markov chain ``restarts'' to the preference vector, $\boldsymbol{\omega}$, whereas parameter $\beta$ weights the involvement of the Direct and the NCD Proximity matrices in the final Markov chain model. The personalized ranking vector for each newly added user is defined to be the \textit{stationary probability distribution} of the Markov chain that corresponds to the stochastic matrix $\mathbf{S}$, using the normalized ratings of the user as the initial distribution. 
\begin{description}
	\item [\textbf{Direct Proximity Matrix $\mathbf{H}$}.]
	The direct proximity matrix $\mathbf{H}$ is designed to capture the direct relations between the elements of $\mathcal{V}$. Generally, every such element will be associated with a discrete distribution $\mathbf{h}_v = [h_1,h_2,\cdots,h_m]$ over $\mathcal{V}$, that reflects the correlation between these elements. In our case, we use the stochastic matrix defined as follows:
	\begin{equation}
	\mathbf{H} \triangleq \operatorname{diag}(\mathbf{Ce})^{-1}\mathbf{C}
	\end{equation}
	where $\mathbf{C}$ is an $m\times m$ matrix whose $ij^{th}$ element is defined to be $[\mathbf{C}]_{ij} \triangleq \mathbf{r^{\intercal}_i}\mathbf{r_j}$ for $i \neq j$, zero otherwise, and $\mathbf{e}$ is a properly sized unit vector. 
	\item [\textbf{NCD Proximity Matrix $\mathbf{D}$}.]
	The NCD proximity matrix $\mathbf{D}$ is created to depict the interlevel connections between the elements of the item space.
	In particular, each row of matrix $\mathbf{D}$ denotes a probability vector $\mathbf{d}_v$, that distributes evenly its mass between the $N_v$ blocks of $\mathfrak{D}_v$, and then, uniformly to the included items of each block. Formally, matrix $\mathbf{D}$ is defined by: 
	\begin{equation}
	\mathbf{D}\triangleq \mathbf{X}\mathbf{Y}
	\label{def:D}
	\end{equation}
	where $\mathbf{X,Y}$ denote the row normalized versions of $\mathbf{A}_{\mathcal{D}}$ and $\mathbf{A}^{\intercal}_{\mathcal{D}}$ respectively. 
\end{description}

\begin{lemma}
	Matrices $\mathbf{H,D}$ are well defined row stochastic matrices.
\end{lemma}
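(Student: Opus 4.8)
The plan is to verify, separately for $\mathbf{H}$ and $\mathbf{D}$, the two defining properties of a row stochastic matrix — nonnegativity of every entry and unit row sums — while reading the phrase ``well defined'' as the requirement that the diagonal normalizing factors appearing in both definitions are genuinely invertible. I would dispatch $\mathbf{D}$ first, since it is the cleaner case and reduces to the elementary fact that a product of row stochastic matrices is again row stochastic.

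For $\mathbf{D} = \mathbf{X}\mathbf{Y}$, recall that $\mathbf{X}=\operatorname{diag}(\mathbf{A}_{\mathcal{D}}\mathbf{e})^{-1}\mathbf{A}_{\mathcal{D}}$ is the row-normalized $\mathbf{A}_{\mathcal{D}}$ and $\mathbf{Y}$ the row-normalized $\mathbf{A}^{\intercal}_{\mathcal{D}}$. The $j$th row sum of $\mathbf{A}_{\mathcal{D}}$ equals $N_{v_j}$, the number of blocks containing $v_j$, which is at least one because $\mathcal{V}=\bigcup_{k}\mathcal{D}_k$ forces every item into some block; hence $\operatorname{diag}(\mathbf{A}_{\mathcal{D}}\mathbf{e})$ is invertible and, by construction, $\mathbf{X}\geq 0$ with $\mathbf{X}\mathbf{e}=\mathbf{e}$. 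Symmetrically, the $k$th row sum of $\mathbf{A}^{\intercal}_{\mathcal{D}}$ equals $\lvert\mathcal{D}_k\rvert\geq 1$, since $\mathcal{D}$ is a family of non-empty sets, so $\mathbf{Y}$ is well defined, nonnegative, and satisfies $\mathbf{Y}\mathbf{e}=\mathbf{e}$. It then follows immediately that $\mathbf{D}\geq 0$ and $\mathbf{D}\mathbf{e}=\mathbf{X}(\mathbf{Y}\mathbf{e})=\mathbf{X}\mathbf{e}=\mathbf{e}$, which is the claim.

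For $\mathbf{H}=\operatorname{diag}(\mathbf{Ce})^{-1}\mathbf{C}$, nonnegativity is transparent: each off-diagonal entry $[\mathbf{C}]_{ij}=\mathbf{r}^{\intercal}_i\mathbf{r}_j$ is an inner product of two columns of the nonnegative matrix $\mathbf{R}$, hence $\geq 0$, while the diagonal is zero; dividing each row by its nonnegative sum preserves nonnegativity and yields unit row sums wherever the division is legitimate. The essential point — and the main obstacle — is therefore well-definedness, i.e. that every diagonal entry $[\mathbf{Ce}]_i$ is strictly positive, so that $\operatorname{diag}(\mathbf{Ce})$ is invertible. I would expand
\begin{equation*}
[\mathbf{Ce}]_i=\sum_{j\neq i}\mathbf{r}^{\intercal}_i\mathbf{r}_j=\mathbf{r}^{\intercal}_i(\mathbf{R}\mathbf{e}-\mathbf{r}_i)=\sum_{u}r_{ui}\sum_{j\neq i}r_{uj},
\end{equation*}
and observe that this quantity is strictly positive precisely when some user who rated $v_i$ has also rated at least one other item. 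This is slightly stronger than the blanket assumption that each item is rated by at least one user, so I would either adopt it as the natural non-degeneracy hypothesis on $\mathcal{R}$ — ruling out an item whose sole rater rated nothing else — or remark that such isolated items are discarded in preprocessing. Under it, $\operatorname{diag}(\mathbf{Ce})$ is invertible and $\mathbf{H}$ is a well defined row stochastic matrix, completing the proof.
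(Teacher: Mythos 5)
Your proposal is correct and follows essentially the same route as the paper: for $\mathbf{D}$ you reduce well-definedness to the non-vanishing of the row sums of $\mathbf{A}_{\mathcal{D}}$ and $\mathbf{A}_{\mathcal{D}}^{\intercal}$ (covering property and non-emptiness of the blocks), and for $\mathbf{H}$ to the invertibility of $\operatorname{diag}(\mathbf{Ce})$. The one place you go beyond the paper is worth keeping: the paper asserts that ``every item is rated by at least one user'' already forces every row of $\mathbf{C}$ to be non-zero, but your expansion $[\mathbf{Ce}]_i=\sum_{u}r_{ui}\sum_{j\neq i}r_{uj}$ correctly exposes the degenerate case of an item whose only raters rated nothing else, for which row $i$ of $\mathbf{C}$ vanishes; your explicit non-degeneracy hypothesis (or preprocessing remark) is the right fix, and is a sharper treatment than the paper's own.
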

\begin{proof}
	We will begin with matrix $\mathbf{H}$. First, notice that for matrix $\mathbf{H}$ to be well defined it is necessary $\operatorname{diag}(\mathbf{Ce})$ to be invertible. But this is assured by our model's assumption that every item have been rated by at least one user. Indeed, when this assumption holds, every row of matrix $\mathbf{C}$ denotes a non-zero vector in $\mathfrak{R}^m$, thus $\mathbf{C}\mathbf{e}$ denotes a vector of strictly positive elements, which makes the diagonal matrix $\operatorname{diag}(\mathbf{Ce})$ invertible, as needed.
	
	For matrix $\mathbf{D}$ it suffices to show that for any $\mathcal{D}$-decomposition, every column and every row of the corresponding aggregation matrix $\mathbf{A}_\mathcal{D}$, denote non-zero vectors in $\mathfrak{R}^m$ and $\mathfrak{R}^K$ respectively.
	The latter is ensured from the fact that NCD blocks are defined to be non-empty, whereas the former condition holds because the union of the $\mathcal{D}$-blocks denote a cover of the itemspace. 
\end{proof}

\begin{example}
		\label{Example}	
	{\upshape 
		
	To clarify the definition of the NCD matrices $\mathbf{W,D}$, we give the following example. Consider a simple movie recommendation system consisting of an itemspace of 8 movies and a userspace of 10 users each having rated at least one movie. Let the set of ratings, $\mathcal{R}$, be the one presented below:

	\begin{equation}
	 \mathcal{R}=\left\{\begin{matrix}
	 (u_4,v_1,1),& (u_7,v_1,4),&(u_8,v_1,5),\\
	 (u_{10}, v_1, 5),& (u_5, v_2, 5),& (u_1, v_3, 4),\\
	 (u_2, v_3, 5),&(u_8, v_3, 2),&(u_9, v_3, 2),\\
	 (u_{10}, v_3, 5),&(u_3, v_4, 2),&(u_4, v_4, 5),\\
	 (u_5, v_4, 4),&(u_9, v_4, 1),&(u_1, v_5, 1),\\
	 (u_5, v_5, 5),&(u_6, v_5, 5),&(u_7, v_5, 3),\\
	 (u_3, v_6, 3),&(u_{10}, v_6, 5),&(u_3, v_7, 1),\\
	 (u_3, v_8, 5),&(u_6, v_8, 5),&(u_8, v_8, 5),
	  \end{matrix}\right\}
	\end{equation}

\begin{figure*}[!htpb]
	\centering
	\epsfig{file=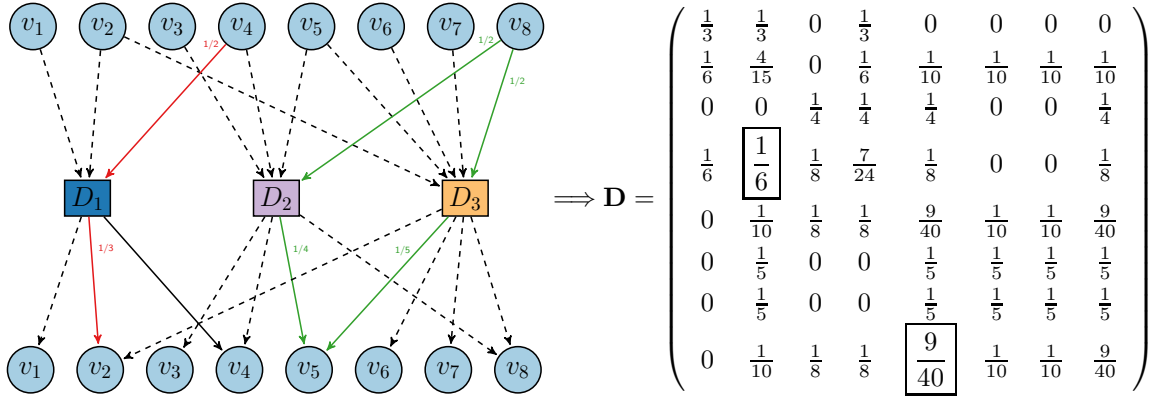,width=0.95\linewidth}
	\caption{We see the matrix $\mathbf{D}$ that corresponds to Example \ref{Example}. 
		We highlight with red and green color the computation of the $[\mathbf{D}]_{42}$ and $[\mathbf{D}]_{85}$, respectively. }
	\label{fig:Example2}
\end{figure*}

	 Assume also that the 8 movies of the itemspace belong to 3 genres as seen below:  
	 
 		\begin{equation}
 		\bordermatrix{~ & \mathcal{D}_1 & \mathcal{D}_2 & \mathcal{D}_3 & \vr \mathcal{N}_v  \cr
 			v_1 & \checkmark & - & - & \VVR 1   \cr
 			v_2 & \checkmark & - & \checkmark & \VVR 2  \cr
 			v_3 & - & \checkmark & - & \VVR 1 \cr
 			v_4 & \checkmark & \checkmark & - & \VVR 2 \cr
 			v_5 & - & \checkmark & \checkmark & \VVR 2 \cr
 			v_6 & - & - & \checkmark & \VVR 1 \cr
 			v_7 & - & - & \checkmark & \VVR 1 \cr
 			v_8 & - & \checkmark & \checkmark & \VVR 2 \cr}
 		\end{equation}
 		
 		The corresponding aggregation matrix $\mathbf{A}_{\mathcal{D}}\in \mathbb{R}^{8 \times 3}$ is
 		
 		\begin{equation}
	 	\mathbf{A}_{\mathcal{D}} = \left(\begin{array}{ccc}
	 	1 & 0 & 0\\
	 	1 & 0 & 1\\
	 	0 & 1 & 0\\
	 	1 & 1 & 0\\
	 	0 & 1 & 1\\
	 	0 & 0 & 1\\
	 	0 & 0 & 1\\
	 	0 & 1 & 1\\
	 	 \end{array}\right)
 		\end{equation}
	 
		Following the definition of matrix $\mathbf{W}$ we get the matrix shown in Figure \ref{fig:Example1}. For the factor matrices $\mathbf{Z,X}$ we have: 
		
		\begin{displaymath}
		\mathbf{Z} = \left(\begin{array}{ccc} 0 & \frac{5}{2} & 1\\ 0 & 5 & 0\\ 2 & \frac{7}{2} & 3\\ 3 & 5 & 0\\ \frac{9}{2} & \frac{9}{2} & 5\\ 0 & 5 & 5\\ 4 & 3 & 3\\ 5 & \frac{7}{2} & 5\\ 1 & \frac{3}{2} & 0\\ 5 & 5 & 5 \end{array}\right), \quad 		\mathbf{X} = \left(\begin{array}{ccc} 1 & 0 & 0\\ \frac{1}{2} & 0 & \frac{1}{2}\\ 0 & 1 & 0\\ \frac{1}{2} & \frac{1}{2} & 0\\ 0 & \frac{1}{2} & \frac{1}{2}\\ 0 & 0 & 1\\ 0 & 0 & 1\\ 0 & \frac{1}{2} & \frac{1}{2} \end{array}\right)
		\end{displaymath}

Similarly, in Figure~\ref{fig:Example2} we give the detailed computation of the inter-item NCD Proximity matrix $\mathbf{D}$ of the {\sc ColdStart} component.    
}\end{example}

\subsubsection{Theoretical Properties of the ColdStart Subcomponent}

Informally, the introduction of the NCD proximity matrix $\mathbf{D}$, helps the item space become more ``connected'', allowing the recommender to reach more items even for the set of newly added users. When the blocks are overlapping this effect becomes stronger, and in fact, item space coverage can be guaranteed under certain conditions. 

\begin{theorem}[ItemSpace Coverage]
If the block coupling graph $\mathcal{G}_{\mathcal{D}}$ is connected, there exists a unique steady state distribution $\boldsymbol{\pi}$ of the Markov chain corresponding to matrix $\mathbf{S}$ that depends on the preference vector $\boldsymbol{\omega}$; however, irrespectively of any particular such vector, the support of this distribution includes every item of the underlying space. 
\label{Theorem1}
\end{theorem}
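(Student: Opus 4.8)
The plan is to show that the connectivity hypothesis on $\mathcal{G}_{\mathcal{D}}$ forces the Markov chain governed by $\mathbf{S}$ to be irreducible, and then to read off both claims from the standard Perron--Frobenius theory for finite stochastic matrices. First I would verify that $\mathbf{S}$ is genuinely a transition matrix: since $\alpha,\beta\in(0,1)$, the three coefficients $(1-\alpha)$, $\alpha\beta$ and $\alpha(1-\beta)$ are strictly positive and sum to one, so $\mathbf{S}$ is a convex combination of the row-stochastic matrices $\mathbf{e}\boldsymbol{\omega}^{\intercal}$, $\mathbf{H}$ and $\mathbf{D}$ (the latter two being stochastic by the preceding Lemma, and the first because $\boldsymbol{\omega}$ sums to one). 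Entrywise nonnegativity then gives $\mathbf{S}\ge\alpha(1-\beta)\mathbf{D}$, which is the inequality I would lean on throughout: any transition made possible by $\mathbf{D}$ is also possible under $\mathbf{S}$, so the directed support graph of $\mathbf{S}$ contains that of $\mathbf{D}$.

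The heart of the argument is to characterize the support of $\mathbf{D}=\mathbf{X}\mathbf{Y}$. Writing out the product, $[\mathbf{D}]_{ij}=\sum_{k:\,v_i,v_j\in\mathcal{D}_k}\tfrac{1}{N_{v_i}\lvert\mathcal{D}_k\rvert}$, so $[\mathbf{D}]_{ij}>0$ exactly when $v_i$ and $v_j$ lie in a common $\mathcal{D}$-block; in particular this support pattern is symmetric and, because every item lies in at least one block, the diagonal is strictly positive. I would then lift connectivity of $\mathcal{G}_{\mathcal{D}}$ to the item level: given any two items $v_a\in\mathcal{D}_{k_0}$ and $v_b\in\mathcal{D}_{k_t}$, connectivity of the block coupling graph yields a block path $\mathcal{D}_{k_0},\dots,\mathcal{D}_{k_t}$ with consecutive intersections non-empty; picking a witness item $w_s\in\mathcal{D}_{k_s}\cap\mathcal{D}_{k_{s+1}}$ for each step produces a chain $v_a,w_0,w_1,\dots,w_{t-1},v_b$ in which every consecutive pair shares a block, hence corresponds to a positive entry of $\mathbf{D}$. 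This is the step I expect to be the main obstacle, since it is where the combinatorial hypothesis is actually used; the rest is bookkeeping.

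With the item-level $\mathbf{D}$-graph shown to be strongly connected, the containment $\mathbf{S}\ge\alpha(1-\beta)\mathbf{D}$ makes $\mathbf{S}$ irreducible, and the positive diagonal inherited from $\mathbf{D}$ makes it aperiodic as well. I would then invoke Perron--Frobenius for irreducible stochastic matrices to conclude that $\mathbf{S}$ admits a unique stationary distribution $\boldsymbol{\pi}$ and that every component of $\boldsymbol{\pi}$ is strictly positive, i.e.\ its support is all of $\mathcal{V}$. Finally I would observe that the only part of $\mathbf{S}$ carrying $\boldsymbol{\omega}$ is the rank-one term $(1-\alpha)\mathbf{e}\boldsymbol{\omega}^{\intercal}$, so $\boldsymbol{\pi}$ does depend on the chosen preference vector, whereas irreducibility---and therefore the full-support conclusion---comes entirely from the $\mathbf{D}$ term and is insensitive to which $\boldsymbol{\omega}$ is used, exactly as the statement asserts.
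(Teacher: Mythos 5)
Your proposal is correct and follows essentially the same route as the paper's proof: both lift connectivity of $\mathcal{G}_{\mathcal{D}}$ to item-level positive-probability paths by choosing witness items in consecutive block intersections, both get aperiodicity from the strictly positive diagonal of $\mathbf{D}$, and both transfer these properties to $\mathbf{S}$ by observing that a convex combination with positive weight on $\mathbf{D}$ preserves its support graph (the paper packages this as its Lemma~4, you as the domination $\mathbf{S}\ge\alpha(1-\beta)\mathbf{D}$). The only cosmetic differences are that the paper argues irreducibility by contradiction via the decomposition theorem and concludes positivity of $\boldsymbol{\pi}$ from finite mean recurrence times, whereas you argue strong connectivity directly and invoke Perron--Frobenius.
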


\begin{proof} 
	Before we proceed to the actual proof, we will give a small sketch: 
 When $\mathcal{G}_{\mathcal{D}}$ is connected, the Markov chain induced by the stochastic matrix $\mathbf{S}$ consists of a single irreducible and aperiodic closed set of states, that includes all the items. To prove the irreducibility part, we will show that the NCD proximity stochastic matrix, that corresponds to a connected block coupling graph, ensures that starting from any particular state of the chain, there is a positive probability of reaching every other state. For the aperiodicity part we will show that matrix $\mathbf{D}$ makes it possible, for the Markov chain to return to any given state in consecutive time epochs. The above is true for every stochastic vector $\boldsymbol{\omega}$, and for every positive real numbers $\alpha,\beta<1$. 
  
 \begin{lemma}
 	The connectivity of $\mathcal{G}_{\mathcal{D}}$ implies the irreducibility of the Markov chain with transition probability matrix $\mathbf{D}$.
 \end{lemma}
 
 \begin{proof}
 	 From the decomposition theorem of Markov chains we know that the state space $\mathcal{S}$ can be partitioned uniquely as 
 	 \begin{equation}
 	 \mathcal{S} = \mathcal{T}\cup\mathcal{C}_1\cup\mathcal{C}_2\cup\cdots
 	 \end{equation}
 	 where $\mathcal{T}$ is the set of transient states, and the $\mathcal{C}_i$ are irreducible closed sets of persistent states \cite{grimmett2001probability}.
 	 
 	 Furthermore, since $\mathcal{S}$ is finite at least one state is persistent and all persistent states are non-null (see \cite{grimmett2001probability}, page 225). We will prove that the connectivity of $\mathcal{G}_{\mathcal{D}}$ alone, ensures that starting from this state $i$, we can visit every other state of the Markov chain. In other words, the connectivity of $\mathcal{G}_{\mathcal{D}}$ implies that $\mathcal{T}=\emptyset$ and there exists only one irreducible closed set of persistent states.
 	 
 	 Assume, for the sake of contradiction, that  $\mathcal{G}_{\mathcal{D}}$ is connected and there exists a state $j$ outside the set $\mathcal{C}$. This, by definition, means that there exists no path that starts in state $i$ and ends in state $j$. 
 	 
 	 Here we will show that when $\mathcal{G}_{\mathcal{D}}$ is connected, it is always possible to construct such a path. Let $v_i$ be the item corresponding to state $i$ and $v_j$ the item corresponding to state $j$. Let $\mathfrak{D}_{v_i}$ the proximal set of items of $v_i$. We must have one of the following cases: 
 	 \begin{itemize}
 	 	\item[$v_j \in \mathfrak{D}_{v_i}$:] In this case, the states are directly connected, and $\Pr\{\text{next is $j$}|\text{we are in $i$}\}$ equals:
 	 	\begin{equation}
 	 	 [\mathbf{D}]_{ij}= \sum_{\mathcal{D}_k \in \mathfrak{D}_{v_i}, v_j \in \mathcal{D}_k}\frac{1}{N_{v_i}\lvert \mathcal{D}_k\rvert}
 	 	 \label{probability_d_ij}
 	 	\end{equation}
 	 	which can be seen by Eq.~(\ref{def:D}) together with the definitions of Section \ref{ModelDefinitions}. 
 	 	\item[$v_j \notin \mathfrak{D}_{v_i}$:] In this case, the states are not directly connected. Let $\mathcal{D}_{v_j}$ be a $\mathcal{D}$-block that contains $v_j$, and $\mathcal{D}_{v_i}$ a $\mathcal{D}$-block that contains $v_i$. Notice that $v_j \notin \mathfrak{D}_{v_i}$ implies that $\mathcal{D}_{v_i}\cap\mathcal{D}_{v_j}=\emptyset$. However, since $\mathcal{G}_{\mathcal{D}}$ is assumed connected, there exists a sequence of vertices corresponding to $\mathcal{D}$-blocks, that forms a path in the block coupling graph between nodes $\mathcal{D}_{v_i}$ and $\mathcal{D}_{v_j}$. Let this sequence be the one below:
 	 	\begin{equation}
 	 	\mathcal{D}_{v_i},\mathcal{D}_1,\mathcal{D}_2,\dots,\mathcal{D}_n,\mathcal{D}_{v_j}
 	 	\end{equation}  Then, choosing arbitrarily one state that corresponds to an item belonging to each of the $\mathcal{D}$-blocks of the above sequence, we get the sequence of states:
 		\begin{equation}
 		i,t_1,t_2,\dots,t_n,j
 		\label{StateSequence}
 		\end{equation}
 		which corresponds to the sequence of items 
 		\begin{equation}
 		v_i,v_{t_1},v_{t_2},\dots,v_{t_n},v_j
 		\end{equation}
 		Notice that the definition of the $\mathcal{D}$-blocks together with the definitions of the proximal sets and the block coupling graph, imply that this sequence has the property every item, after $v_i$, to belong to the proximal set of the item preceding it, i.e.  
 		\begin{equation}
 		v_{t_1} \in \mathfrak{D}_{v_i},v_{t_2} \in \mathfrak{D}_{v_{t_1}},\dots,v_{j} \in \mathfrak{D}_{v_{t_n}}
 		\end{equation}
 		Thus, the consecutive states in sequence~(\ref{StateSequence}) communicate, or
 		\begin{equation}
 		i\to t_1 \to t_2 \to \dots \to t_n \to j
 		\label{trajectory}
 		\end{equation}
 		and there exists a positive probability path between states $i$ and $j$.
     \end{itemize}
     
 	 In concussion, when $\mathcal{G}_{\mathcal{D}}$ is connected there will always be a path starting from state $i$ and ending in state $j$. But because state $i$ is persistent, and belongs to the irreducible closed set of states $\mathcal{C}$, state $j$ belongs to the same irreducible closed set of states too. This contradicts our assumption. Thus, when $\mathcal{G}_{\mathcal{D}}$ is connected every state belongs to a single irreducible closed set of states, $\mathcal{C}$. 
 \end{proof}
  
  Now it remains to prove the aperiodicity property.
  
  \begin{lemma}
  	The Markov chain induced by matrix $\mathbf{D}$ is aperiodic.  
  \end{lemma}
   \begin{proof}
	It is known that the period of a state $i$ is defined as the greatest common divisor of the epochs at which a return to the state is possible~\cite{grimmett2001probability}. Thus, it suffices to show that we can return to any given state in consecutive time epochs. But this can be seen readily because the diagonal elements of matrix $\mathbf{D}$ are by definition, all greater than zero; thus, for any state and for every possible trajectory  of the Markov chain of length $k$ there is another one of length $k+1$ with the same starting and ending state, that follows the self loop as its final step. In other words, leaving any given state of the corresponding Markov chain, one can always return in consecutive time epochs, which makes the chain aperiodic. And the proof is complete.
   \end{proof}
  
 We have shown so far that the connectivity of $\mathcal{G}_{\mathcal{D}}$ results is enough to ensure the irreducibility and aperiodicity of the Markov chain with transition probability matrix $\mathbf{D}$.
  
  It remains now to prove that the same thing holds for the complete stochastic matrix $\mathbf{S}$. This can be done using the following useful lemma, the proof of which can be found in the Appendix \ref{Ap:LemmaProof}.
\begin{lemma}
	If $\mathbf{A}$ is the transition matrix of an irreducible and aperiodic Markov chain with finite state space, and $\mathbf{B}$ the transition matrix of any Markov chain defined onto the same state space, then matrix $\mathbf{C} = \kappa\mathbf{A} + \lambda\mathbf{B}$, where $\kappa,\lambda > 0$ such that $\kappa+\lambda = 1$ denotes the transition matrix of an irreducible and aperiodic Markov chain also.
	\label{Lemma4}
\end{lemma}
Applying Lemma~\ref{Lemma4} twice, first to matrix:
\begin{equation}
\mathbf{T} = \beta \mathbf{H} + (1-\beta)\mathbf{D}
\end{equation} 
and then to matrix:
\begin{equation}
\mathbf{S} = (1-\alpha)\mathbf{e}\boldsymbol{\omega}^\intercal + \alpha\mathbf{T}
\end{equation} 
gives us the irreducibility and the aperiodicity of matrix $\mathbf{S}$. Taking into account the fact that the state space is finite, the resulting Markov chain becomes ergodic~\cite{grimmett2001probability} and there exists a unique recommendation vector corresponding to its steady state probability distribution which is given by 
\begin{equation}
\boldsymbol{\pi} = [\pi_1 \pi_2 \cdots \pi_m] = [\frac{1}{\mu_1} \frac{1}{\mu_2} \cdots \frac{1}{\mu_m}]
\end{equation}
where $\mu_i$ is the mean recurrence time of state $i$. However, for ergodic states, by definition it holds that 
\begin{equation}
1\leq \mu_i < \infty
\end{equation}
Thus $\pi_i>0$, for all $i$, and the support of the distribution that defines the recommendation vector includes every item of the  underlying space.

\end{proof}

The above theorem suggests that even for a user who have rated only one item, when the chosen decomposition enjoys the above property, our recommender finds a way to assign preference probabilities for the complete item space. Note that the criterion for this to be true is not that restrictive. For example for the  \texttt{MovieLens} datasets, using as a criterion of decomposition the categorization of movies into genres, the block coupling graph is connected. This, proves to be a very useful property, in dealing with the cold-start problem as we will see in the experimental evaluation presented in Section~\ref{SubSec:NewUsersTests}.

\subsection{NCDREC Algorithm: Storage and Computational Issues}
\label{SubSec:NCDREC_Algorithm}

\begin{algorithm*}
	\begin{algorithmic}
		\begin{multicols}{2}
			\State \textbf{Input:} Matrices $\mathbf{R}\in\mathbb{R}^{n \times m}$, $\mathbf{H} \in \mathbb{R}^{m \times m}, \mathbf{X} \in \mathbb{R}^{m\times K}, \mathbf{Y} \in  \mathbb{R}^{K\times m},  \mathbf{Z} \in  \mathbb{R}^{n\times K} $. Parameters $\alpha, \beta ,f, \epsilon$ \\
			\textbf{Output:} The matrix with recommendation vectors for every user, $\mathbf{\Pi}\in \mathbb{R}^{n \times m}$
			\State \textbf{Step 1:} Find the newly added users and collect their preference vectors into matrix $\mathbf{\Omega}$. 
			\State \textbf{Step 2:} Compute $\mathbf{\Pi}_{\text{sparse}}$ using the {\sc ColdStart} procedure.
			\State \textbf{Step 3:} Initialize vector $\mathbf{p_1}$ to be a random unit length vector.
			\State \textbf{Step 4:} Compute the modified Lanczos procedure up to step $M$, using {\sc NCD\_PartialLBD}  with starting vector $\mathbf{p_1}$.
			\State \textbf{Step 5:} Compute the SVD of the bidiagonal matrix $\mathbf{B}$ to extract $f<M$ approximate singular triplets: 
			\begin{displaymath}
			\{\mathbf{\tilde{u}_j}, \sigma_j, \mathbf{\tilde{v}_j} \} \gets \{ \mathbf{Q}\mathbf{u_j^{(B)}}, \sigma_j^{(B)}, \mathbf{P}\mathbf{v_j^{(B)}}\}
			\end{displaymath}
			\State \textbf{Step 6:} Orthogonalize against the approximate singular vectors to get a new starting vector $\mathbf{p_1}$.
			\State \textbf{Step 7:} Continue the Lanczos procedure for $M$ more steps using the new starting vector.
			\State \textbf{Step 8:} Check for convergence tolerance. If met compute matrix $\mathbf{\Pi}_{\text{full}} = \mathbf{\tilde{U}\Sigma\tilde{V}^\intercal}$ else go to \textbf{Step 4}.
			\State \textbf{Step 9:} Update $\mathbf{\Pi}_{\text{full}}$, replacing the rows that correspond to new users with  $\mathbf{\Pi}_{\text{sparse}}$.
			\State \Return $\mathbf{\Pi}_{\text{full}}$
			\Statex
			\Procedure{NCD\_PartialLBD}{$\mathbf{R},\mathbf{X},\mathbf{Z},\mathbf{p_1},\epsilon$}
			\State $\boldsymbol{\phi} \gets \mathbf{X}^\intercal\mathbf{p_1}$; $\mathbf{q_1} \gets \mathbf{R}\mathbf{p_1} + \epsilon\mathbf{Z}\boldsymbol{\phi}$;
			\State $b_{1,1} \gets \lVert \mathbf{q_1} \rVert_{2}$ ; $\mathbf{u_1} \gets \mathbf{q_1}/b_{1,1}$;
			\For{$j = 1 \text{ to } M$}
			\State $\boldsymbol{\phi} \gets \mathbf{Z}^{\intercal}\mathbf{q_j}$;
			\State $\mathbf{r} \gets \mathbf{R}^{\intercal}\mathbf{q_j} + \epsilon\mathbf{X}\boldsymbol{\phi} - b_{j,j}\mathbf{p_j}$;
			\State $\mathbf{r} \gets \mathbf{r} - \left[\mathbf{p_{1}} \dots \mathbf{p_{j}}\right]\left([\mathbf{p_{1}} \dots \mathbf{p_{j}}]^{\intercal}\mathbf{r} \right)$;
			\If{$j < M$} 
			\State $b_{j,j+1} \gets \lVert \mathbf{r} \rVert$; $\mathbf{p_{j+1}} \gets \mathbf{r}/b_{j,j+1}$;
			\State $\boldsymbol{\phi} \gets \mathbf{X}^\intercal\mathbf{p_{j+1}}$;
			\State $\mathbf{q_{j+1}} \gets \mathbf{R}\mathbf{p_{j+1}} + \epsilon\mathbf{Z}\boldsymbol{\phi} - b_{j,j+1} \mathbf{q_j}$;
			\State $\mathbf{q_{j+1}} \gets \mathbf{q_{j+1}} - \left[\mathbf{q_{1}} \dots \mathbf{q_{j}}\right]([\mathbf{q_{1}} \dots \mathbf{q_{j}}]^{\intercal}\mathbf{q_{j+1}})$;
			\State $b_{j+1,j+1} \gets \lVert \mathbf{q_{j+1}} \rVert$;
			\State $\mathbf{q_{j+1}} \gets \mathbf{q_{j+1}}/b_{j+1,j+1}$;
			\EndIf
			\EndFor
			\EndProcedure
			\Statex
			\Procedure{ColdStart}{$\mathbf{H},\mathbf{X},\mathbf{Y},\mathbf{\Omega},\alpha,\beta$}
			\State $\mathbf{\Pi} \gets \mathbf{\Omega}$;  $k\gets0$; $r \gets 1;$
			\While{$ r > \mathrm{tol}$ and $k\leq \mathrm{maxit}$}
			\State $k \gets k+1$; 
			\State $\mathbf{\hat{\Pi}} \gets \alpha\beta \mathbf{\Pi}\mathbf{H}$; $\mathbf{\Phi} \gets \mathbf{\Pi} \mathbf{X} $; 
			\State $\mathbf{\hat{\Pi}} \gets \mathbf{\hat{\Pi}} + \alpha(1-\beta)\mathbf{\Phi}\mathbf{Y} +(1-\alpha)\mathbf{\Omega}$; 
			\State $ r \gets \lVert \mathbf{\hat{\Pi}} - \mathbf{\Pi}\rVert$; $ \mathbf{\Pi} \gets \mathbf{\hat{\Pi}} $;
			\EndWhile \\
			\Return $\mathbf{\Pi}_{\text{sparse}} \gets \mathbf{\Pi} $
			\EndProcedure
		\end{multicols}
	\end{algorithmic}
	\caption{NCDREC Algorithm}\label{NCDREC}
\end{algorithm*}

It is clear that for the majority of reasonable decompositions the number of blocks is much smaller than the cardinality of the item space, i.e. $K \ll m$; this makes matrices $\mathbf{D}$ and $\mathbf{W}$, extremely low-rank. Thus, if we take into account the inherent sparsity of the ratings matrix $\mathbf{R}$, and of the component matrices $\mathbf{X,Y,Z}$,  we see that the storage needs of NCDREC are in fact modest. 

Furthermore, the fact that matrices  $\mathbf{G}$ and $\mathbf{S}$ can be expressed as a sum of sparse and low-rank components, can also be exploited computationally as we see in the NCDREC algorithm presented above. Our algorithm makes sure that the computation of the recommendation vectors can be carried out without needing to explicitly compute matrices $\mathbf{G}$ and $\mathbf{S}$.

The computation of the singular triplets is based on a fast partial SVD method proposed by Baglama and Reighel in \cite{baglama2005augmented}. However, because their method presupposes the existence of the final matrix, we modified the partial Lanczos bidiagonalization iterative procedure to take advantage of the factorization of the NCD preferences matrix $\mathbf{W}$ into matrices $\mathbf{X},\mathbf{Z}$. The detailed computation is presented in the \textsc{NCD\_PartialLBD} procedure in Algorithm~\ref{NCDREC}.  For the computation of the newly added users' recommendations, we collect their preference vectors in an extremely sparse matrix $\mathbf{\Omega}$, and we compute their stationary distributions using a batch power method approach exploiting matrices $\mathbf{X},\mathbf{Y}$.
Notice that the exploitation of the factorization of the NCD matrices in both procedures results in a significant drop of the number of floating point operations per iteration, since every dense Matrix$\times$Vector (MV) multiplication, is now replaced by a sum of lower dimensional and sparse MV's, making the overall method significantly faster.

\section{Experimental Evaluation}
\label{Sec_Experiments}
In order to evaluate the performance of NCDREC in recommending top-N lists of items, we run a number of experiments using two real datasets: the  \texttt{Yahoo!R2Music}, which  represents a real snapshot of the Yahoo! Music community's preferences for various songs, and the standard \texttt{MovieLens} (1M and 100K) datasets. These datasets also come with information that relates the items to genres; this was chosen as the criterion of decomposition behind the definition of matrices $\mathbf{D}$ and $\mathbf{W}$. 
 For further details about these datasets see  \url{http://webscope.sandbox.yahoo.com} and \url{http://grouplens.org/}. A synopsis of their basic characteristics is presented in Table~\ref{table:Dataset}.

Exploiting meta-information is a very useful weapon in alleviating sparsity related problems~\cite{DBLP:reference/rsh/DesrosiersK11}.  Thus, in order to provide fair comparisons we test our method against recommendation methods that:
\begin{itemize}
	\item[(a)] can also take advantage of the categorization of items to genres and,
	\item[(b)] are known to show lower sensitivity to the problems of limited coverage and sparsity \cite{DBLP:reference/rsh/DesrosiersK11}.
	\end{itemize}
	  In particular, we run NCDREC\footnote{The perturbation parameter $\epsilon$ was set to 0.01, the number of latent factors was selected from the range 2 to 800, and the \textsc{ColdStart} subcomponent parameters were chosen to be $\alpha = 0.01$ and $\beta = 0.75$.} against five state-of-the-art graph-based approaches; the node similarity algorithms $\mathbf{L}^\dagger$, and \textbf{Katz}; the random walk approaches \textbf{First Passage Time} (FP) and \textbf{Commute Time} (CT) and the \textbf{Matrix Forest Algorithm} (MFA).

\subsection{Competing Recommendation Methods}
\label{subsec:MovieLensDataset} 

The data model used for all the competing methods is a graph representation of the recommender system database. Concretely, consider a weighted graph $G$ with nodes corresponding to database elements and database links corresponding to edges.
For example, in the \texttt{MovieLens} datasets each element of the \texttt{people} set, the \texttt{movie} set,
and the \texttt{movie\_category} set, corresponds to a node of the
graph, and each \texttt{has\_watched} and \texttt{belongs\_to} link is 
expressed as an edge \cite{Fouss:2012:EIK:2207289.2207578,Fouss:2007:RCS:1263132.1263335}.

Generally speaking, graph-based recommendation methods work by computing similarity measures between every element in the recommender database and then using these measures to compute ranked lists of the items with respect to each user. 

\begin{table*}[!hpbt]
	\centering
	\setlength{\tabcolsep}{1.51em}
	\caption{Datasets}
	{\normalsize \begin{tabular}{rrrrr} 
			\toprule
			\toprule
			Dataset &  \#Users & \#Items & \#Ratings & Density \\
			\midrule
			\texttt{MovieLens100K} & 943 & 1682 & 100,000 & 6.30\% \\
			\texttt{MovieLens1M} & 6,040 & 3,883 & 1,000,209 & 4.26\% \\
			\texttt{Yahoo!R2Music} & 1,823,179 & 136,736 & 717,872,016 & 0.29\% \\
			\bottomrule
			\bottomrule
		\end{tabular}}
		\label{table:Dataset}
	\end{table*}

\paragraph{The pseudoinverse of the graph's Laplacian ($\mathbf{L\dagger}$).} This matrix contains the inner products of the node vectors
in a Euclidean space where the nodes are exactly separated by the commute time distance \cite{Fouss:2007:RCS:1263132.1263335}. 
For the computation of the $\mathbf{L^\dagger}$ matrix we used the formula: 
\begin{equation}
\mathbf{L^\dagger} = (\mathbf{L}-\frac{1}{n+m+K}\mathbf{e}\mathbf{e}^\intercal)^{-1}+\frac{1}{n+m+K}\mathbf{e}\mathbf{e}^\intercal
\end{equation} where $\mathbf{L}$ is the Laplacian of the graph model of the recommender system, $n$, the number of users, $m$, the number of items, and $K$, the number of blocks (see~\cite{Fouss:2012:EIK:2207289.2207578} for details).
\paragraph{The MFA similarity matrix $\mathbf{M}$.} MFA matrix contains elements that also provide similarity measures between nodes of the graph by integrating indirect paths, based on the matrix-forest theorem~\cite{chebotarev2006matrix}. Matrix $\mathbf{M}$ was computed by 
\begin{equation}
\mathbf{M} = \left(\mathbf{I}+\mathbf{L}\right)^{-1}
\end{equation}
where $\mathbf{I}$, the identity matrix.
\paragraph{The Katz similarity matrix $\mathbf{K}$.} Katz similarity matrix is computed by
\begin{equation}
\mathbf{K} = \alpha\mathbf{A}+\alpha^2\mathbf{A}^2+\dots = \left(\mathbf{I-\alpha\mathbf{A}}\right)^{-1}-\mathbf{I}
\end{equation}
where $\mathbf{A}$ is the adjacency matrix of the graph and $\alpha$ measures the attenuation in a link (see~\cite{KATZ}). 

\paragraph{Average First Passage Times.} The Average First Passage Time scores are computed by iteratively solving the recurrence
\begin{equation}
\left\{\begin{matrix}
\operatorname{FP}(k|k) &=& 0& &\\
\operatorname{FP}(k|i) &=& 1&+&\sum_{j=1}^{n+m+K}p_{ij}\operatorname{FP}(k|j), \quad \text{for }i\ne k
\end{matrix} \right.
\end{equation} where $p_{ij}$ is the conditional probability a random walker in the graph $G$, visits node  $j$ next, given that he is currently in node $i$. 

\paragraph{Average Commute Times.} Finally, Average Commute Times scores can be obtained in terms of the Average First-Passage Times by:
\begin{equation}
\operatorname{CT}(i,j) = \operatorname{FP}(i|j)+\operatorname{FP}(j|i)
\end{equation}

For further details about the competing algorithms  see \cite{Fouss:2007:RCS:1263132.1263335,Fouss:2012:EIK:2207289.2207578,chebotarev2006matrix,KATZ} and the references therein.  

\begin{figure*}[!htb]
\centering
\epsfig{file=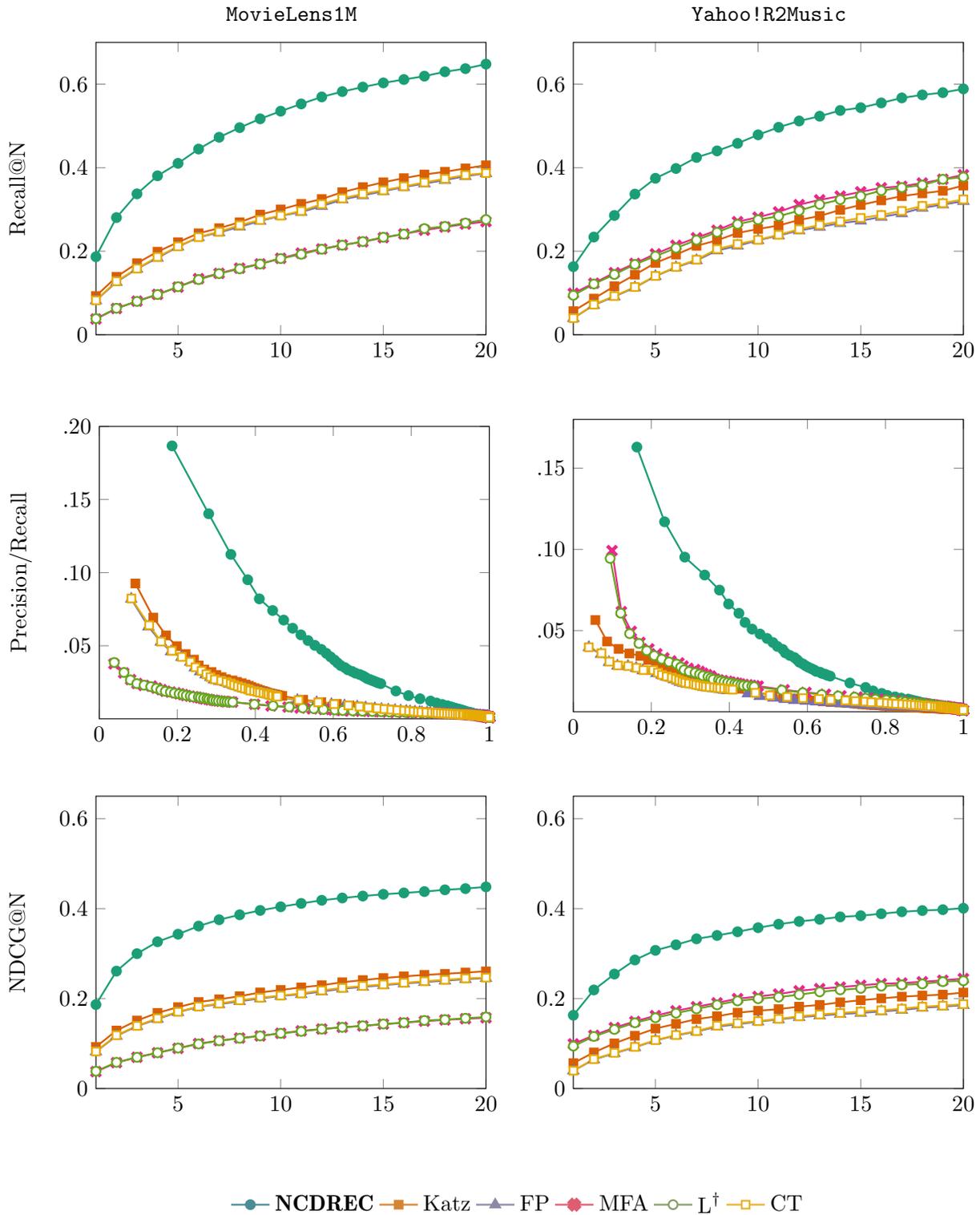,width=\textwidth}
\caption{Recommendation quality on \texttt{MovieLens1M} and \texttt{Yahoo!R2Music} datasets using Recall@N, Precision and NDCG@N metrics}
\label{fig:QualityFull}
\end{figure*}

\subsection{Quality of Recommendation} 
\label{SubSec:StandardRecommendation}
To evaluate the quality of our method in suggesting top-N items,
we have adopted the methodology used in~\cite{Cremonesi:2010:PRA:1864708.1864721}. In particular, we randomly sampled 1.4\% of the ratings of the dataset in order to create a probe set $\mathcal{P}$, and we use each item $v_{j}$, rated with 5-star by user $u_i$ in $\mathcal{P}$ to form the test set $\mathcal{T}$. Finally, for each item in $\mathcal{T}$, we randomly select another 1000 unrated items of the same user and we rank the 1001 item lists using the different methods mentioned and we evaluate the quality of recommendations. 

For this evaluation, except for the
standard \textbf{Recall} and \textbf{Precision}  metrics~\cite{Baeza-Yates:2008:MIR:1796408,Cremonesi:2010:PRA:1864708.1864721}, we
also use a number of other well known ranking measures, which discount the utility of recommended items depending on their position in the recommendation list~\cite{Balakrishnan:2012:CR:2124295.2124314,DBLP:reference/rsh/ShaniG11}; namely the \textbf{R-Score}, the \textbf{Normalized Discounted Cumulative Gain} and the \textbf{Mean Reciprocal Rank} metrics. R-Score assumes that the value of recommendations decline \textit{exponentially} fast to yield for each user the following score:
\begin{equation}
\mathrm{R}(\alpha) = \sum_{q} \frac{\max(y_{\pi_q}-d,0)}{2^{\frac{q-1}{\alpha-1}}}
\end{equation}
where $\alpha$ is a half-life parameter which controls the exponential decline, $\pi_q$ is the index of the $q^{th}$ item in the recommendation ranking list $\boldsymbol{\pi}$, and $\mathbf{y}$ is a vector of the relevance values for a sequence of items. In Cumulative Discounted Gain the ranking positions are discounted \textit{logarithmically} and is defined as:
\begin{equation}
\mathrm{DCG@}k(\mathbf{y},\boldsymbol{\pi}) = \sum_{q=1}^{k} \frac{2^{y_{\pi_q}} - 1}{\log_2(2+q)} 
\end{equation}
The Normalized Discounted Cumulative Gain can then be defined as:
\begin{equation}
\mathrm{NDCG@}k = \frac{\mathrm{DCG@}k(\mathbf{y},\boldsymbol{\pi})} {\mathrm{DCG@}k(\mathbf{y},\boldsymbol{\pi^\star})} 
\end{equation}
where, $\boldsymbol{\pi^\star}$ is the best possible ordering of the items with respect to the relevant scores (see~\cite{Balakrishnan:2012:CR:2124295.2124314} for details). Finally, Mean Reciprocal Rank (MRR) is the average of each user's reciprocal rank score, defined as follows: 
\begin{equation}
\mathrm{RR} =  \frac{1}{\min_q \{q:y_{\pi_q}>0\} }
\end{equation}
MRR decays more slowly than R-Score but faster than NDCG.

Figure \ref{fig:QualityFull} reports the performance of the algorithms on the Recall, Precision and NDCG metrics. In particular, we report the average Recall as a function of $N$ (focusing on the range $N=[1,\dots,20]$), the Precision at a given Recall, and the NDCG@$N$, for the \texttt{MovieLens1M} (1st column) and \texttt{Yahoo!R2Music} (2nd column) datasets. As we can see NCDREC outperforms all other methods, reaching for example at $N=10$, a Recall around 0.53 on \texttt{MovieLens} and 0.45 on the sparser \texttt{Yahoo!R2Music} dataset. Similar behavior is observed for the Precision and the NDCG metrics as well. Table \ref{Table:FullQuality} presents the results for the R-Score (with halflife parameters 5 and 10) and the MRR metrics. Again we see that NCDREC achieves the best results with MFA and $L\dagger$ doing significantly better than the other graph-based approaches in the sparser dataset. 

\begin{table*}[htbp!] 
	\centering
	\setlength{\tabcolsep}{1.51em}
	\caption{Recommendation quality on \texttt{MovieLens1M} and \texttt{Yahoo!R2Music} datasets using R-Score and MRR metrics}
	{\normalsize \begin{tabular}{@{}rrrrcrrrc@{}}
			\toprule
			\toprule
			&
			\multicolumn{3}{c}{\textbf{\texttt{MovieLens1M}}} &
			\phantom{abc}&
			\multicolumn{3}{c}{\textbf{\texttt{Yahoo!R2Music}}} &
			\\
			\cmidrule{2-4}
			\cmidrule{6-8}
			& R(5) & R(10) & MRR & & R(5) & R(10) & MRR  \\ 
			\midrule
			NCDREC & \textbf{0.3997} & \textbf{0.5098} & \textbf{0.3008} & & \textbf{0.3539} & \textbf{0.4587} & \textbf{0.2647} \\                      
			MFA & 0.1217 & 0.1911 & 0.0887 & & 0.2017 & 0.2875 & 0.1591  \\
			$L\dagger$ & 0.1216 & 0.1914 & 0.0892 & & 0.1965 & 0.2814 & 0.1546  \\
			FP & 0.2054 & 0.2874 & 0.1524 & & 0.1446 & 0.2241 & 0.0998  \\
			Katz & 0.2187 & 0.3020 & 0.1642 & & 0.1704 & 0.2529 & 0.1203  \\ 
			CT & 0.2070 & 0.2896 & 0.1535 & & 0.1465 & 0.2293 & 0.1019  \\
			\bottomrule
			\bottomrule
		\end{tabular}}
		\label{Table:FullQuality}
	\end{table*}

Finally, for completeness, we also run NCDREC on the standard \texttt{MovieLens100K} dataset using the publicly available 5 predefined splittings into training and test sets. Here, we use the Degree of Agreement metric (a variant of Somer's D statistic\footnote{We give a detailed definition of the DOA metric in Section~\ref{SubSec:NewUsersTests} where we also present other ranking stability metrics.}, that have been used by many authors for the performance evaluation of ranking-based recommendations on \texttt{MovieLens100K}) in order to allow direct comparisons with the different results to be found in the literature \cite{Fouss:2007:RCS:1263132.1263335,Freno:2009:SPE:1557019.1557059,Gori:2007:IRB:1625275.1625720,iExpand,Zhang:2008:TPB:1390334.1390465}. 

NCDREC obtained a macro-averaged DOA score of \textbf{92.25} and a micro-averaged DOA of \textbf{90.74} which is -- to the best of our knowledge -- the highest scores achieved thus far on this benchmark dataset.

\subsection{Long-Tail Recommendation}
\label{SubSec:LongTailTests}
It is well known that the distribution of rated items in recommender systems is long-tailed, i.e. the majority of the ratings is concentrated in a few very popular items. Of course, recommending popular items is generally considered an easy task and adds very little utility in recommender systems. On the other hand, the task of recommending long-tail items adds \textit{novelty} and \textit{serendipity} to the users~\cite{Cremonesi:2010:PRA:1864708.1864721}, and it is also known to increase the profits of e-commence companies significantly~\cite{anderson2008long,yin2012challenging}. The inherent sparsity of the data however -- which is magnified even more for long tail items -- presents a major challenge for most state-of-the-art collaborative filtering methods.

\begin{figure*}[!htbp]
\centering
\epsfig{file=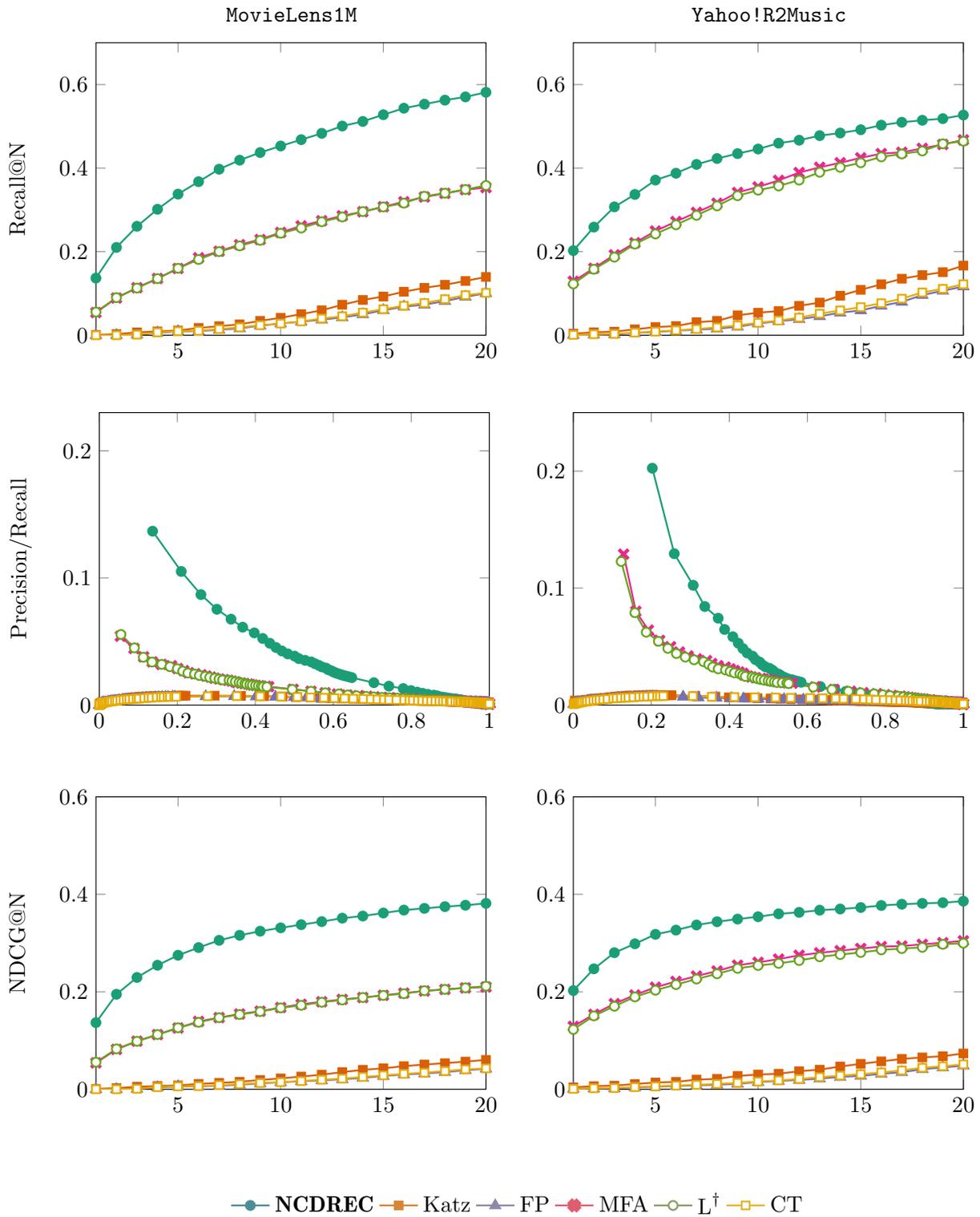,width=\textwidth}
\caption{Long tail recommendation quality on \texttt{MovieLens1M} and \texttt{Yahoo!R2Music} datasets using Recall@N, Precision and NDCG@N metrics}
\label{fig:QualityLong}
\end{figure*}

\begin{table*}[!hpbt]
	\centering
	\setlength{\tabcolsep}{1.51em}
	\caption{Long tail recommendation quality on \texttt{MovieLens1M} and \texttt{Yahoo!R2Music} datasets using R-Score and MRR metrics}
	{\normalsize \begin{tabular}{@{}rrrrcrrrc@{}}
			\toprule
			\toprule
			&
			\multicolumn{3}{c}{\textbf{\texttt{MovieLens1M}}} &
			\phantom{abc}&
			\multicolumn{3}{c}{\textbf{\texttt{Yahoo!R2Music}}} &
			\\
			\cmidrule{2-4}
			\cmidrule{6-8}
			& R(5) & R(10) & MRR & & R(5) & R(10) & MRR \\ 
			\midrule
			NCDREC & \textbf{0.3279} & \textbf{0.4376} & \textbf{0.2395} & & \textbf{0.3520} & \textbf{0.4322} & \textbf{0.2834}  \\                      
			MFA & 0.1660 & 0.2517 & 0.1188 & & 0.2556 & 0.3530 & 0.1995  \\
			$L\dagger$ & 0.1654 & 0.2507 & 0.1193 & & 0.2492 & 0.3461 & 0.1939  \\
			FP & 0.0183 & 0.0654 & 0.0221 & & 0.0195 & 0.0684 & 0.0224  \\
			Katz & 0.0275 & 0.0822 & 0.0267 & & 0.0349 & 0.0939 & 0.0309  \\
			CT & 0.0192 & 0.0675 & 0.0227 & & 0.0215 & 0.0747 & 0.0249  \\
			\bottomrule
			\bottomrule
		\end{tabular}}
		\label{Table:LongQuality}
	\end{table*}

In order to evaluate NCDREC in recommending long-tail items we adopt the methodology described in~\cite{Cremonesi:2010:PRA:1864708.1864721}. In particular, we order the items according to their popularity (the popularity was measured in terms of number of ratings) and we further partition the test set $\mathcal{T}$ into two subsets, $\mathcal{T}_{\mathrm{head}}$ and $\mathcal{T}_{\mathrm{tail}}$, that involve items originated from the short head, and the long tail of the distribution, respectively. We discard the popular items and we evaluate NCDREC and the other algorithms on the  $\mathcal{T}_{\mathrm{tail}}$ test set, using the procedure explained in the previous section. Figure~\ref{fig:QualityLong} and Table~\ref{Table:LongQuality} report the results.

We see that NCDREC achieves again the best results, managing to retain its performance in all metrics and for both datasets. Notice here the significant drop in quality of the random walk based methods, which were found to behave very well in the standard recommendation scenario. This finding indicates their bias in recommending popular items.  MFA and $L\dagger$ on the other hand, do particularly well, exhibiting great ability in uncovering non-trivial relations between the items, especially in the sparser \texttt{Yahoo!R2Music} dataset.

\subsection{Recommendations for Newly Emerging Users} 
\label{SubSec:NewUsersTests}

One very common manifestation of sparsity faced by real recommender systems is the \textit{New-Users Problem}. This problem refers to the difficulty of achieving reliable recommendations for newly emerging users in an existing recommender system, due to the \textit{de facto} initial lack of personalized feedback. This problem can also be seen as an extreme and localized expression of sparsity, that prohibits CF methods to uncover meaningful relations between the set of new users and the rest of the RS database, and thus, undermines the reliability of the produced recommendations.  

To evaluate the performance of our method in coping with this problem we run the following experiment. We randomly select 100 users from the \texttt{MovieLens1M} dataset having rated 100 movies or more and we randomly select to include 4\%, 6\%, 8\%, 10\% of their ratings in new artificially ``sparsified'' versions of the dataset. The idea is that the modified data represent ``earlier snapshots'' of the system, when these users were new, and as such, had rated fewer items. We run NCDREC\footnote{Note that the ranking list for the set of newly added users was produced by the \textsc{ColdStart} subcomponent.} against the other methods, and we compare the recommendation vectors with the ranking lists induced by the complete set of ratings, which we use as the reference ranking for each user.

For this comparison except for the standard \textbf{Spearman's $\boldsymbol{\rho}$} and \textbf{Kendall's $\boldsymbol{\tau}$} metrics \cite{Baeza-Yates:2008:MIR:1796408,DBLP:reference/rsh/ShaniG11}, we also use two other  well known ranking measures, namely the
\textbf{Degree of Agreement} (DOA) \cite{Fouss:2007:RCS:1263132.1263335,Freno:2009:SPE:1557019.1557059,Gori:2007:IRB:1625275.1625720} and the \textbf{Normalized Distance-based Performance Measure} (NDPM) \cite{DBLP:reference/rsh/ShaniG11}, outlined below. Table~\ref{table:Metrics} contains all the necessary definitions. 

\begin{table*}[!htpb]
	\centering
	\caption{A summary of the notation used for the definition of the ranking stability metrics}
	\ra{1.19}
	{\normalsize \begin{tabular}{rl} 
			\toprule
			\toprule
			Notation & Meaning \\
			\midrule
			$\boldsymbol{r^i}$ & User's $u_i$ reference ranking \\
			$\boldsymbol{\pi^i}$ & Recommender System generated ranking\\
			$r^i_{v_j}$& Ranking score of the item $v_j$ in user's $u_i$ ranking list (reference ranking)\\
			$\pi^i_{v_j}$& Ranking score of the item $v_j$ in user's $u_i$ ranking list (Recommender System generated ranking)\\ 
			$C$& Number of pairs that are concordant\\
			$D$& Number of discordant pairs\\
			$N$& Total number of pairs\\
			$T_r$& Number of tied pairs in the reference ranking\\
			$T_\pi$& Number of tied pairs in the system ranking\\
			$X$ & Number of pairs where the reference ranking does not tie, but the RS's \\
			& ranking ties ($N-T_r-C-D$) \\
			\bottomrule
			\bottomrule
		\end{tabular}}
		\label{table:Metrics}
	\end{table*}

	\begin{description}
		\item [Kendall's $\boldsymbol{\tau}$] is an intuitive nonparametric rank correlation index that has been widely used in the literature. The $\tau$ of ranking lists $\boldsymbol{r^i}$, $\boldsymbol{\pi^i}$ is defined to be:
		\begin{equation}
		\tau \triangleq \frac{C- D}{\sqrt{N - T_r} \sqrt{N - T_\pi}}
		\end{equation}
		and takes the value of 1 for perfect match and -1 for reversed ordering.
		
		\item [Spearman's $\boldsymbol{\rho}$] is another widely used non-parametric measure of rank correlation. The $\rho$ of ranking lists $\boldsymbol{r^i}$, $\boldsymbol{\pi^i}$ is defined to be:
		\begin{equation}
		\rho \triangleq \frac{1}{m} \frac{\sum_{v_j}(r^i_{v_j}-\bar{r}^i) (\pi^i_{v_j}-\bar{\pi}^i)}{\sigma(\boldsymbol{r^i}) \sigma(\boldsymbol{\pi^i})} 
		\end{equation}
		where the $\bar{\cdot}$ and $\sigma(\cdot)$ denote the mean and standard deviation. 
		The $\rho$ takes values from -1 to 1. A $\rho$ of 1 indicates perfect rank association, a $\rho$ of zero indicates no association between the ranking lists and a $\rho$ of -1 indicate a perfect negative association of the rankings. 
		\item[Degree of Agreement] (DOA) is a performance index commonly used in the recommendation literature to evaluate the quality of ranking-based CF methods \cite{Fouss:2007:RCS:1263132.1263335,Freno:2009:SPE:1557019.1557059,Gori:2007:IRB:1625275.1625720,Zhang:2008:TPB:1390334.1390465}. DOA is a variant of the Somers' D statistic \cite{siegel1988nonparametric}, defined as follows:
		\begin{equation}
		\text{DOA}_i \triangleq \frac{\sum_{v_j \in \mathcal{T}_i\wedge v_k \in \mathcal{\overline{W}}_i}[\pi^i_{v_j}>\pi^i_{v_k}]}{\mid\mathcal{T}_i\mid*\mid \overline{(\mathcal{L}_i \cup \mathcal{T}_i)} \mid}
		\end{equation} 
		where $\left[S\right]$ equals 1, if statement $S$ is true and zero otherwise. Macro-averaged DOA (macro-DOA) is the average of all DOA$_i$ and micro-averaged DOA (micro-DOA) is the ratio between the aggregate number of item pairs in the correct order and the total number of item pairs checked (for further details see \cite{Fouss:2007:RCS:1263132.1263335,Freno:2009:SPE:1557019.1557059}).
		\item[Normalized Distance-based Performance Measure] The NDPM of ranking lists $\boldsymbol{r^i}$, $\boldsymbol{\pi^i}$ is defined to be:
		\begin{equation}
		\mathrm{NDPM} \triangleq \frac{D+0.5X}{N-T_r}
		\end{equation}
		The NDPM measure gives a perfect score of 0 to RS that correctly predict every preference relation asserted by the reference. The worst score of 1 is assigned to recommendation vectors that contradict every preference relation in $\boldsymbol{r^i}$ \cite{DBLP:reference/rsh/ShaniG11,DBLP:journals/jasis/Yao95}.
	\end{description}

High scores on the first three metrics ($\rho$, $\tau$, DOA) and low score on the last (NDPM), suggest that the two ranking lists \cite{DBLP:reference/rsh/ShaniG11} are ``close'', which means that the new users are more likely to receive recommendations closer to their tastes as described by their full set of ratings.

\begin{figure*}[!htpb]
	\centering
	\epsfig{file=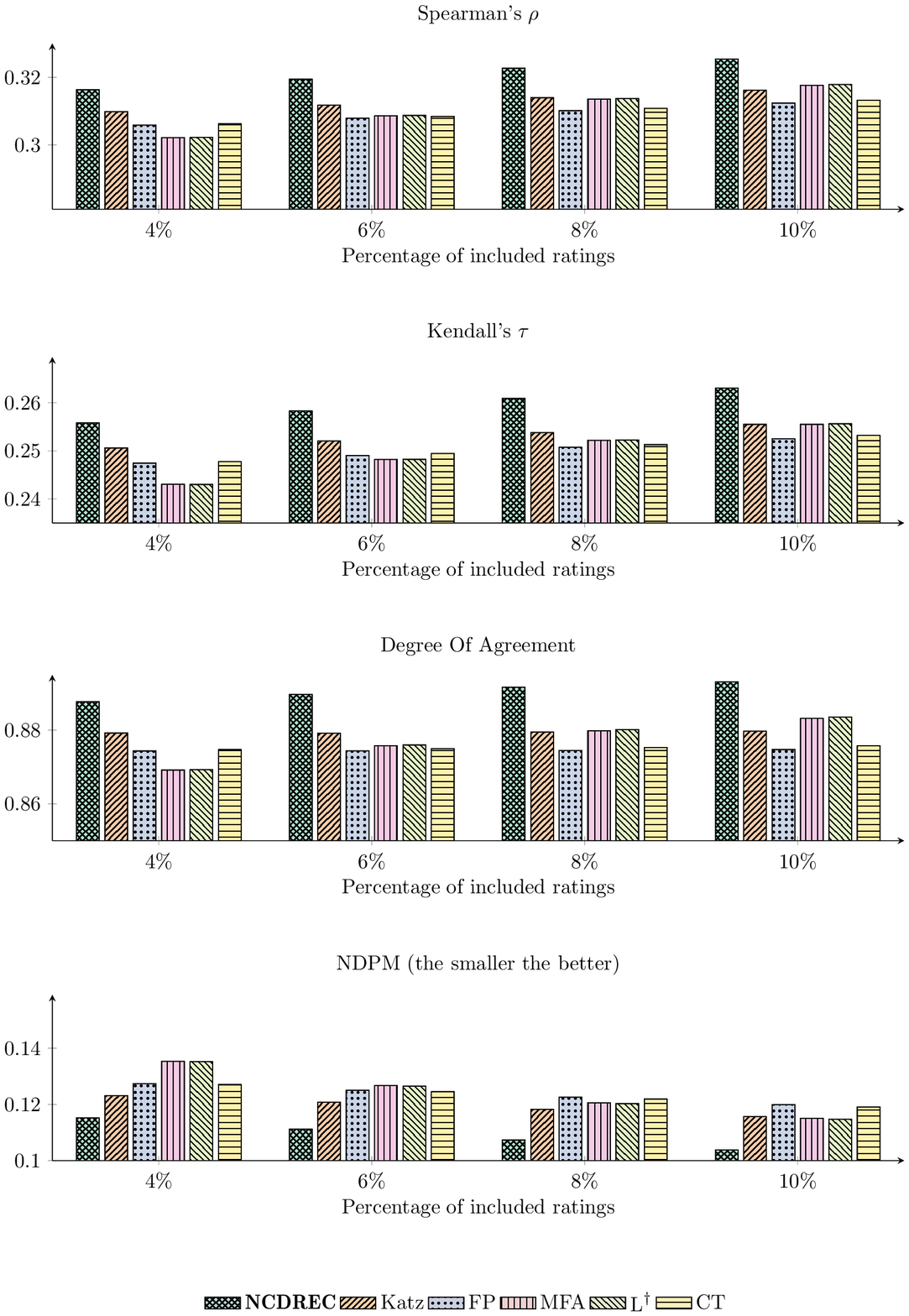,height=\textheight}
	\caption{Recommendation performance for \textit{New Users} problem}
	\label{fig:NAU_Tests}
\end{figure*}

In Figure \ref{fig:NAU_Tests} we report the average scores on all four metrics, for the set of newly added users. We see that NCDREC clearly outperforms every other method considered, achieving good results even when only 4\% of each user's ratings were included. MFA and $L\dagger$ also do well, especially as the number of ratings increases. These results are in accordance with the intuition behind our approach and the theoretical properties of the \textsc{ColdStart} subcomponent. We see that, even though new users' tastes are not yet clear, the exploitation of NCD proximity captured by matrix $\mathbf{D}$, manages to ``propagate'' this scarce rating information to the many related elements of the item space, giving our method an advantage in uncovering new users' preferences. This leads to a recommendation vector exhibiting lower sensitivity to sparsity.

\section{Conclusions and Future Work}
\label{Sec_Conclusion}
In this work we proposed NCDREC; a novel method that builds on the intuition behind \textit{Decomposability} to provide an elegant and computationally efficient framework for generating recommendations. NCDREC exploits the innately hierarchical structure of the item space, introducing the notion of \textit{NCD proximity}, which characterizes inter-level relations between the elements of the system and gives our model useful  antisparsity theoretical properties.

One very interesting direction we are currently pursuing involves  the generalization of the \textsc{ColdStart} subcomponent exploiting the functional rankings family~\cite{DampingFunctions}. In particular, based on a recently proposed, multidamping reformulation of these rankings~\cite{Gallopoulos2,Gallopoulos} that allows intuitive and fruitful interpretations of the damping functions in terms random surfing habits, one could try to capture the actual newly emerging users' behavior as they begin to explore the recommender system, and map it to suitable collections of personalized damping factors that could lead to even better recommendations. Another interesting research path that remains to be explored involves the introduction of more than one decompositions based on different criteria, and the effect it has to the theoretical properties of the \textsc{ColdStart} subcomponent. Notice, that in NCDREC this generalization can be achieved readily, through the introduction of new low-rank proximity matrices, $\mathbf{D_1,W_1, D_2, W_2, \dots}$ and associated parameters, with no effect on the dimensionality of the model.

In this work, we considered the single decomposition case. Our experiments on the \texttt{MovieLens} and the \texttt{Yahoo!R2Music} datasets, indicate that NCDREC outperforms several -- known for their antisparsity properties -- state-of-the-art graph-based algorithms in widely used performance metrics, being at the same time by far the most economical one. Note here that the random-walk approaches, FP and  CT, require to handle a graph of $(n+m+K)$ nodes and to compute $2nm$ first passage time scores. Similarly, $\mathrm{L}\dagger$, Katz and MFA, involve the inversions of an $(n+m+K)$-dimensional square matrix. In fact, only NCDREC involves matrices whose dimensions depend solely on the cardinality of the itemspace, which in most realistic applications increases slowly.

In conclusion, our findings suggest that NCDREC carries the potential of handling sparsity effectively, and produce high quality results in standard, long-tail as well as cold-start recommendation scenarios. 


\appendix

\section{Theoretical Discussion of NCDREC's Main Component}
\label{Ap:MainComponentDiscussion}

Let us consider the singular value decomposition of matrix $\mathbf{G}$,
\begin{equation}
\mathbf{G}  =  \mathbf{U}\mathbf{\Sigma}\mathbf{V}^\intercal 
\end{equation}
Multiplying from the right with $\mathbf{V}$ and using the fact that its columns denote an orthonormal set of vectors we get
\begin{equation}
\mathbf{G}\mathbf{V} =  \mathbf{U}\mathbf{\Sigma} 
\end{equation} 
Multiplying from the right with the diagonal matrix $\operatorname{Diag}(\underbrace{1,\dots,1}_{f},0,\dots,0)$ gives
\begin{eqnarray}
\mathbf{G}\begin{bmatrix}
\mathbf{V_f} & \boldsymbol{0}
\end{bmatrix} & = & \mathbf{U}\begin{bmatrix}
\mathbf{\Sigma_f} & \boldsymbol{0} \\
\boldsymbol{0} & \boldsymbol{0}
\end{bmatrix} 
\end{eqnarray}
and finally, discarding the zero columns we get 
\begin{equation}
\mathbf{G}\mathbf{V_f} =  \mathbf{U_f}\mathbf{\Sigma_f} 
\end{equation}

Now plugging this in Eq.~(\ref{Eq:MainComponentRecVectors}) we see that the recommendation vector for the user $u_i$, $\boldsymbol{\pi}_i^\intercal$ is given by:
\begin{equation}
\boldsymbol{\pi}_i^\intercal =  \mathbf{g}_{u_i}^\intercal\mathbf{V_f}\mathbf{V_f}^\intercal
\end{equation}

Notice that $\mathbf{V_f}$ contains the orthogonal set of eigenvectors of the $m\times m$ symmetric positive semidefinite matrix 
\begin{eqnarray}
\mathbf{G^\intercal G} & = & (\mathbf{R} + \epsilon\mathbf{W})^\intercal(\mathbf{R} + \epsilon\mathbf{W}) \nonumber\\
& = & (\mathbf{R}^\intercal + \epsilon\mathbf{W}^\intercal)(\mathbf{R} + \epsilon\mathbf{W})  \nonumber\\
& = & \mathbf{R}^\intercal\mathbf{R} + \epsilon(\mathbf{R}^\intercal\mathbf{W}+\mathbf{W}^\intercal\mathbf{R}) +\epsilon^2\mathbf{W}^\intercal\mathbf{W}\nonumber\\
\label{Eq:EigenRepl}
\end{eqnarray}

Thus the recommendation vectors produced by the main component of NCDREC can be seen as arising from a low dimensional eigenspace of the NCD - perturbed  inter-item similarity matrix, as seen in Eq.~(\ref{Eq:EigenRepl}).

\section{Proof of Lemma 4}
\label{Ap:LemmaProof}
\addtocounter{lemma}{-1}
\begin{lemma}
	If $\mathbf{A}$ is the transition matrix of an irreducible and aperiodic Markov chain with finite state space, and $\mathbf{B}$ the transition matrix of any Markov chain defined onto the same state space, then matrix $\mathbf{C} = \kappa\mathbf{A} + \lambda\mathbf{B}$, where $\kappa,\lambda > 0$ such that $\kappa+\lambda = 1$ denotes the transition matrix of an irreducible and aperiodic Markov chain also.
\end{lemma}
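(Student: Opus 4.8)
The plan is to exploit the fact that $\mathbf{C}$ dominates a scaled copy of $\mathbf{A}$ entrywise, and then transfer the favorable combinatorial structure of $\mathbf{A}$ to $\mathbf{C}$. First I would verify that $\mathbf{C}$ is a legitimate transition matrix: as a convex combination of two row-stochastic matrices with $\kappa,\lambda>0$ and $\kappa+\lambda=1$, all its entries are nonnegative and each of its rows sums to $\kappa+\lambda=1$, so $\mathbf{C}$ is row stochastic.

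The key structural observation is an entrywise inequality. Since $\lambda\mathbf{B}\ge\mathbf{0}$ we have $\mathbf{C}=\kappa\mathbf{A}+\lambda\mathbf{B}\ge\kappa\mathbf{A}\ge\mathbf{0}$, understood componentwise. Because multiplication of nonnegative matrices is monotone with respect to the entrywise order, a straightforward induction gives $\mathbf{C}^{n}\ge\kappa^{n}\mathbf{A}^{n}\ge\mathbf{0}$ for every $n\ge1$. Hence every trajectory that carries positive probability under $\mathbf{A}$ retains positive probability under $\mathbf{C}$; that is, $[\mathbf{A}^{n}]_{ij}>0$ implies $[\mathbf{C}^{n}]_{ij}>0$.

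With this in hand both conclusions follow simultaneously. Since $\mathbf{A}$ is the transition matrix of an irreducible and aperiodic chain on a finite state space, it is primitive: there exists an integer $N\ge1$ with $[\mathbf{A}^{N}]_{ij}>0$ for all $i,j$. The entrywise bound then yields $\mathbf{C}^{N}\ge\kappa^{N}\mathbf{A}^{N}>\mathbf{0}$, so $\mathbf{C}^{N}$ has strictly positive entries. A finite stochastic matrix possessing a strictly positive power is itself primitive, and primitivity is equivalent to the underlying chain being simultaneously irreducible and aperiodic, which delivers the claim. If one prefers to avoid invoking primitivity directly, irreducibility can be argued by the same path-lifting used earlier in this proof (any $\mathbf{A}$-path of positive probability is also a $\mathbf{C}$-path of positive probability), and aperiodicity follows because $\{n:[\mathbf{C}^{n}]_{ii}>0\}\supseteq\{n:[\mathbf{A}^{n}]_{ii}>0\}$, and the greatest common divisor of a superset of a set whose gcd equals one is again one.

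The only genuinely delicate points are justifying the monotonicity of matrix powers under the entrywise order and invoking the standard equivalence between primitivity and the conjunction of irreducibility and aperiodicity for finite nonnegative matrices; both are classical facts of Perron--Frobenius theory, so I expect the main obstacle here to be expository rather than mathematical.
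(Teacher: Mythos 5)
Your proof is correct and rests on the same essential observation as the paper's: since $\kappa>0$, every positive-probability transition (and hence every positive-probability path) of the chain governed by $\mathbf{A}$ survives in the chain governed by $\mathbf{C}$, which is exactly the entrywise domination $\mathbf{C}^{n}\ge\kappa^{n}\mathbf{A}^{n}$ you establish. Your packaging via primitivity ($\mathbf{C}^{N}>\mathbf{0}$ for the $N$ witnessing primitivity of $\mathbf{A}$) is a slightly more formal rendering of the paper's path-lifting argument, which you in fact reproduce verbatim as your stated alternative.
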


\begin{proof}
	It is easy to see that for $\kappa,\lambda > 0$ such that $\kappa+\lambda = 1$ matrix $\mathbf{C}$ is also a valid transition probability matrix. Furthermore, when $\mathbf{A}$ is irreducible there exists a positive probability path between any two given states of the corresponding Markov chain. The same path will also be valid for the Markov chain that corresponds to matrix $\mathbf{C}$, as long as $\kappa>0$. The same thing is true for the aperiodicity property, since the addition of the stochastic matrix $\mathbf{B}$ does nothing to the length of the possible paths that allow a return to any given state of the Markov chain that corresponds to matrix $\mathbf{A}$. Thus, the irreducibility and the aperiodicity of $\mathbf{A}$, together with the requirement $\kappa>0$, imply the existence of those properties to the final matrix $\mathbf{C}$, as needed. 
\end{proof}

\end{document}